\documentclass[11pt]{article}

\usepackage[margin=1in]{geometry}

\usepackage{amsmath}
\usepackage{amssymb}
\usepackage{amsthm}

\usepackage{booktabs}
\usepackage{array}

\usepackage{listings}
\usepackage{xcolor}

\usepackage[colorlinks=true,linkcolor=blue,citecolor=blue,urlcolor=blue]{hyperref}

\usepackage{natbib}

\usepackage{enumitem}
\usepackage{microtype}
\usepackage{graphicx}

\newtheorem{theorem}{Theorem}[section]
\newtheorem{proposition}[theorem]{Proposition}
\newtheorem{corollary}[theorem]{Corollary}

\theoremstyle{definition}
\newtheorem{definition}[theorem]{Definition}
\theoremstyle{remark}
\newtheorem{remark}[theorem]{Remark}

\definecolor{codebg}{rgb}{0.95,0.95,0.95}
\definecolor{codegreen}{rgb}{0,0.6,0}
\definecolor{codepurple}{rgb}{0.58,0,0.82}
\definecolor{codeblue}{rgb}{0.0,0.0,0.7}

\lstdefinestyle{formal}{
  backgroundcolor=\color{codebg},
  basicstyle=\ttfamily\small,
  breaklines=true,
  frame=single,
  framerule=0pt,
  xleftmargin=1em,
  xrightmargin=1em,
  aboveskip=0.5em,
  belowskip=0.5em,
  mathescape=true,
  showstringspaces=false,
  literate={->}{$\to$}2 {=>}{$\Rightarrow$}2
}

\lstdefinestyle{elixir}{
  backgroundcolor=\color{codebg},
  basicstyle=\ttfamily\small,
  breaklines=true,
  frame=single,
  framerule=0pt,
  xleftmargin=1em,
  xrightmargin=1em,
  aboveskip=0.5em,
  belowskip=0.5em,
  keywordstyle=\color{codeblue}\bfseries,
  stringstyle=\color{codegreen},
  commentstyle=\color{gray}\itshape,
  morekeywords={defmodule,def,defp,do,end,case,with,if,else,fn,true,false,nil,when,defstruct},
  morecomment=[l]{\#},
  morestring=[b]",
  sensitive=true,
  showstringspaces=false,
}

\lstdefinestyle{pseudo}{
  backgroundcolor=\color{codebg},
  basicstyle=\ttfamily\small,
  breaklines=true,
  frame=single,
  framerule=0pt,
  xleftmargin=1em,
  xrightmargin=1em,
  aboveskip=0.5em,
  belowskip=0.5em,
  keywordstyle=\color{codeblue}\bfseries,
  commentstyle=\color{gray}\itshape,
  morekeywords={function,return,for,each,in,do,if,then,else,end,match,case,fail,with,let,and,or,not,reject,accept,raise,error},
  morecomment=[l]{//},
  morestring=[b]",
  sensitive=true,
  showstringspaces=false,
}

\lstdefinestyle{wasm}{
  backgroundcolor=\color{codebg},
  basicstyle=\ttfamily\small,
  breaklines=true,
  frame=single,
  framerule=0pt,
  xleftmargin=1em,
  xrightmargin=1em,
  aboveskip=0.5em,
  belowskip=0.5em,
  keywordstyle=\color{codeblue}\bfseries,
  commentstyle=\color{gray}\itshape,
  morekeywords={module,import,func,export,param,result,call,i32,i64,f32,f64,local,memory,type},
  morecomment=[l]{;;},
  sensitive=true,
  showstringspaces=false,
}

\title{Certified Purity for Cognitive Workflow Executors:\\
From Static Analysis to Cryptographic Attestation}

\author{Alan L. McCann\\
\textit{Mashin, Inc.}\\
\texttt{research@mashin.live}}

\date{April 2026}

\begin{document}

\maketitle

\begin{abstract}
We present a \emph{certified purity architecture} that converts governance enforcement in cognitive workflow systems from a runtime convention into a structural capability boundary. A prior three-layer governance architecture~\cite{mccann2026structural} proves governance completeness, provenance completeness, and the impossibility of ungoverned effects, conditional on the \emph{pure module constraint}: that step executors cannot perform effects. That constraint was enforced by module import graph analysis, which is insufficient against adversarial bypass on the BEAM virtual machine. This paper closes the gap through four mechanisms: (1)~a restricted WebAssembly compilation target where effect-producing instructions are structurally absent; (2)~\emph{purity certificates}, cryptographically signed proofs binding executor binaries to their import classifications; (3)~a \emph{runtime verification gate} that rejects uncertified executors before they enter the governance pipeline; and (4)~\emph{portable governance credentials} via remote attestation for cross-organizational verification. We prove four theorems: structural purity by construction, bypass elimination for all five BEAM bypass classes, certificate integrity, and gate completeness. The guarantee holds relative to an explicit Trusted Computing Base. Evaluation on four implemented executors shows verification latency of 39--42\,$\mu$s, full plan cycle under 400\,$\mu$s, runtime overhead under 0.4\% of a 100\,ms HTTP request, and zero determinism divergences across repeated invocations.
\end{abstract}

\noindent\textbf{Keywords:} certified purity, cognitive workflows, WebAssembly sandboxing, purity certificates, cryptographic attestation, structural governance, effect isolation

\section{Introduction}
\label{sec:introduction}

\subsection{The Governance Problem}

Cognitive workflow systems (platforms orchestrating AI model inference, tool use, and multi-step reasoning as programmable workflows) increasingly operate in high-stakes domains: finance, healthcare, law, and autonomous agent operation. Current systems (LangChain~\cite{langchain2022}, CrewAI~\cite{crewai2023}, AutoGPT~\cite{autogpt2023}) rely on convention-based governance where developers manually instrument logging, permission checks, and audit trails at each execution point. The guarantee depends on every developer following conventions perfectly.

A prior paper~\cite{mccann2026structural} introduced a three-layer governance architecture that eliminates convention dependence through structural design. At the execution layer, workflow step executors are restricted to pure functions that return declarative \emph{directives} (data structures expressing intended effects), and a single \emph{directive interpreter} processes every directive through a governance pipeline comprising trust verification, permission checks, and provenance recording. The architecture yields three safety properties as formal theorems:

\begin{itemize}[nosep]
  \item \textbf{Governance Completeness:} $\Box\; \forall d.\; (\mathit{effect\_performed}(d) \to \mathit{governance\_checked}(d))$
  \item \textbf{Provenance Completeness:} $\Box\; \forall d.\; (\mathit{effect\_performed}(d) \to \mathit{provenance\_recorded}(d))$
  \item \textbf{No Ungoverned Effects:} $\neg\Diamond\; \exists d.\; (\mathit{effect\_performed}(d) \wedge \neg\mathit{governance\_checked}(d))$
\end{itemize}

These theorems hold for all programs expressible in the system, including executors not yet written. (The temporal operators $\Box$ and $\Diamond$ follow Pnueli's temporal logic of programs~\cite{pnueli1977temporal}.)

\subsection{The Purity Gap}

Every safety theorem in the prior work depends on a single premise: the \emph{pure module constraint}, specifically that executor modules have no access to I/O modules and therefore cannot perform effects. The prior paper enforces this constraint through module import graph analysis and acknowledges the enforcement mechanism:

\begin{quote}
``The structural purity constraint. The executor module has no access to modules that perform I/O. It does not import, alias, or call any module that interfaces with LLM APIs, HTTP clients, filesystem operations, database connections, PubSub systems, or process supervision. This is enforced at the module dependency level, not by convention.''~\cite{mccann2026structural}
\end{quote}

Module-level enforcement is strictly stronger than convention; a developer cannot accidentally bypass it. But a determined adversary can intentionally bypass it. On the BEAM virtual machine~\cite{armstrong2003making} (Erlang/OTP), five classes of bypass exist:

\paragraph{Bypass Class 1: Dynamic dispatch.} \texttt{:erlang.apply(module, function, args)} invokes any function in any loaded module. No import is required. An executor that never imports \texttt{Req} or \texttt{HTTPoison} can still issue HTTP requests via \texttt{:erlang.apply(:httpc, :request, [url])}.

\paragraph{Bypass Class 2: Code evaluation.} \texttt{Code.eval\_string("File.write!(\char`\\"/tmp/exfil\char`\\", data)")} evaluates arbitrary Elixir code at runtime, with full access to the BEAM module environment.

\paragraph{Bypass Class 3: Native implemented functions (NIFs).} A NIF is a shared library loaded into the BEAM process space. NIFs execute native code with no BEAM-level restriction: they can open sockets, write files, and call arbitrary system APIs.

\paragraph{Bypass Class 4: Ports.} \texttt{:erlang.open\_port(\{:spawn, "curl http://evil.com"\}, [])} spawns an external process with full system access. The executor communicates with the process via message passing; no I/O module import is required.

\paragraph{Bypass Class 5: Dynamic module loading.} \texttt{:code.load\_binary(Module, filename, binary)} loads arbitrary BEAM bytecode at runtime, including modules with unrestricted I/O capabilities.

\medskip

Static analysis can detect known patterns of these bypasses. But by Rice's theorem~\cite{rice1953classes}, any non-trivial semantic property of programs is undecidable. No static analysis can determine, for all possible programs, whether they will perform effects via dynamic dispatch, code evaluation, or other indirect mechanisms. A sound overapproximation (conservatively rejecting any program that \emph{might} perform effects) is theoretically possible, but the BEAM's pervasive dynamic features (runtime atom construction, hot code loading, universal message passing) make such an overapproximation impractically coarse: it would reject the majority of legitimate executors that use standard Elixir idioms. The analysis detects violations; it cannot prevent them without unacceptable false positive rates.

This is the \emph{purity gap}: the formal model assumes purity as a property of executor modules, but the implementation enforces it through a mechanism (static analysis) that is necessary but insufficient against adversarial authors. The safety theorems are correct, but their foundational premise rests on the weakest link in the architecture.

\subsection{Contributions}

This paper makes the following contributions:

\begin{enumerate}
  \item \textbf{A formal threat model} (Section~\ref{sec:threat}) characterizing the five bypass classes, their mechanisms, and the fundamental limitation of static analysis on the BEAM.

  \item \textbf{A restricted compilation target} (Section~\ref{sec:architecture}) based on WebAssembly that eliminates all five bypass classes by construction: executors compiled to the restricted target can only invoke a whitelisted set of pure host functions, making effect bypass structurally impossible.

  \item \textbf{A purity certificate scheme} (Section~\ref{sec:architecture}) using Ed25519 signatures over the executor binary and its purity proof, with a formal binding property ensuring certificates cannot be transferred to modified artifacts.

  \item \textbf{A runtime verification gate} (Section~\ref{sec:architecture}) that integrates with the directive interpreter, rejecting any executor without a valid purity certificate before it enters the governance pipeline.

  \item \textbf{Formal proofs} (Section~\ref{sec:formal}) that:
  \begin{itemize}[nosep]
    \item The restricted target satisfies the pure module constraint by construction (Theorem~\ref{thm:structural-purity})
    \item Each of the five bypass classes is individually eliminated (Theorem~\ref{thm:bypass-elimination})
    \item The prior paper's safety theorems hold with the pure module constraint discharged mechanically rather than assumed (Corollary~\ref{cor:prior-theorems})
  \end{itemize}

  \item \textbf{Portable governance credentials} (Section~\ref{sec:attestation}) via remote attestation, extending purity certificates to a cross-organizational trust substrate for distributed AI execution, with a formal compatibility predicate for cross-org governance composition.

  \item \textbf{An explicit Trusted Computing Base definition} (Section~\ref{sec:tcb}) identifying the five runtime components that must be correct for the structural guarantee to hold, with hardening strategies that further reduce trust surface.

  \item \textbf{An implementation architecture} (Section~\ref{sec:implementation}) integrating BEAM, Wasmex, and the existing directive interpreter, with pseudocode for key algorithms and analysis of the build pipeline.
\end{enumerate}

\paragraph{Scope of ``certified.''}  The term \emph{certified purity} refers to the runtime certification gate's mechanical check of purity certificates, not to mechanized proof in the sense of Rocq or Lean.  The proofs in this paper are pen-and-paper; the \emph{certification} is the runtime's algorithmic confirmation that an executor's imports fall within the pure whitelist $\mathcal{W}$, performed at load time before every execution.  The companion paper~\cite{mccann2026structural} provides mechanized Rocq proofs for the underlying governance safety properties; mechanizing this paper's results (e.g., formalizing the WASM capability model and whitelist purity in a proof assistant) is identified as future work that would further strengthen confidence but is not required for the architectural guarantee.

\paragraph{Scope and companion papers.}
This paper addresses the executor-level enforcement gap: converting the pure module constraint from a convention to a structural guarantee. It does not address the formal governance properties themselves or their algebraic generalization, which are developed in companion papers.
\cite{mccann2026structural}~establishes the structural governance criterion and the two-boundary model whose pure module constraint this paper discharges.
\cite{mccann2026mechanized}~provides the mechanized Rocq proofs for the safety and invariance theorems that depend on this constraint.
\cite{mccann2026gcc}~proves that governed execution is semantically transparent: permitted programs compute the same results with or without governance, a property that holds precisely because executors are pure.
\cite{mccann2026algebraic}~lifts these results to a parametric algebraic framework with extraction to a verified NIF integrated into the BEAM runtime.
\cite{mccann2026provenance}~extends the governance lifecycle to the supply chain; attestation records from the purity certificates compose with its distribution provenance system.

\section{Background}
\label{sec:background}

\subsection{The Pure Execution Model}
\label{sec:bg-pure}

We summarize the pure execution model from the prior work~\cite{mccann2026structural}, using the same notation.

\begin{definition}[Executor~{\cite[Def.~1]{mccann2026structural}}]
\label{def:executor}
An executor is a module implementing a pure function:
\[
\mathit{plan} : \mathit{StepConfig} \times \mathit{Context} \to \mathit{Result} \times [\mathit{Directive}] \;\cup\; \mathit{Error}
\]
The function receives step configuration and execution context, and returns either a result with a list of directives, or an error.
\end{definition}

\begin{definition}[Directive~{\cite[Def.~2]{mccann2026structural}}]
\label{def:directive}
A directive is a member of a sum type $D$:
\begin{align*}
D = \;& \mathit{LLMCall}(\mathit{model}, \mathit{system}, \mathit{user}, \mathit{opts}) \\
  |\;& \mathit{HTTPRequest}(\mathit{method}, \mathit{url}, \mathit{headers}, \mathit{body}) \\
  |\;& \mathit{FileOp}(\mathit{operation}, \mathit{path}, \mathit{content}) \\
  |\;& \mathit{CallMachine}(\mathit{machine}, \mathit{inputs}) \\
  |\;& \mathit{MemoryOp}(\mathit{operation}, \mathit{key}, \mathit{value}, \mathit{scope}) \\
  |\;& \cdots
\end{align*}
Directives are data. They describe intent without performing it.
\end{definition}

\begin{definition}[Governance Pipeline~{\cite[Def.~3]{mccann2026structural}}]
\label{def:governance}
The governance pipeline is a function:
\[
\mathit{govern} : D \times G \to \mathit{Governed}(\mathit{Result}) \mid \mathit{Denied}(\mathit{Reason})
\]
where $G$ is a governance context. The pipeline is a composition:
\[
\mathit{govern}(d, g) = \mathit{trust}(d, g) \otimes \mathit{permission}(d, g) \otimes \mathit{phase}(d, g) \otimes \mathit{hooks}(d, g)
\]
where $\otimes$ is sequential composition with short-circuit on failure.
\end{definition}

\begin{definition}[Directive Interpreter~{\cite[Def.~4]{mccann2026structural}}]
\label{def:interpreter}
The interpreter is the sole function crossing the pure/effectful boundary:
\[
\mathit{interpret} : [D] \times G \to [\mathit{Result}]
\]
For each directive $d$:
\begin{align*}
\mathit{interpret}_1(d, g) = \;&\mathbf{let}\; \_\, = \mathit{trust}(d, g) \;\mathbf{in} & \text{-- verify trust ceiling} \\
  &\mathbf{let}\; \_\, = \mathit{permission}(d, g) \;\mathbf{in} & \text{-- check permissions} \\
  &\mathbf{let}\; \_\, = \mathit{phase}(d, g) \;\mathbf{in} & \text{-- validate execution phase} \\
  &\mathbf{let}\; \_\, = \mathit{pre\_hooks}(d, g) \;\mathbf{in} & \text{-- pre-execution hooks} \\
  &\mathbf{let}\; r = \mathit{execute}(d) \;\mathbf{in} & \text{-- perform the effect (\textsc{sole effect site})} \\
  &\mathbf{let}\; \_\, = \mathit{guardrails}(d, r, g) \;\mathbf{in} & \text{-- validate result} \\
  &\mathbf{let}\; \_\, = \mathit{record}(d, r, g) \;\mathbf{in} & \text{-- record provenance} \\
  &r
\end{align*}
\end{definition}

\paragraph{Key structural property.} No executor can bypass any stage of this pipeline because no executor performs effects. The interpreter is the only code that calls I/O modules. All safety theorems depend on this property.

\subsection{The Pure Module Constraint}
\label{sec:bg-constraint}

\begin{definition}[Pure Module Constraint~{\cite[Def.~6]{mccann2026structural}}]
\label{def:pure-constraint}
An executor module $E$ satisfies the pure module constraint if:
\[
\forall f \in \mathit{Type}(E) : f \in \mathbf{Pure}
\]
That is, every function accessible to $E$ is a pure function. No function in $E$'s transitive import closure performs I/O.
\end{definition}

The prior paper proves that under this constraint, any function definable in $E$ is pure (composition of pure functions is pure), and therefore $E$ cannot emit a directive and also execute it. The safety theorems follow from this constraint applied universally to all executors.

\subsection{Proof-Carrying Code}
\label{sec:bg-pcc}

Necula~\cite{necula1997pcc} introduced proof-carrying code (PCC), in which a code producer attaches a machine-checkable proof that the code satisfies a safety policy. The code consumer verifies the proof before execution, obtaining the safety guarantee without trusting the producer. PCC was originally applied to memory safety: proving that native code does not access memory outside its designated regions.

Our purity certificate adapts the PCC paradigm to a different property: \emph{effect purity} rather than memory safety. The proof artifact attests that the executor binary contains no import that could perform I/O, and the runtime verifies this attestation before loading the executor. The key difference from classical PCC is that our proof is structural rather than logical: it enumerates the module's imports and classifies each one against a known-pure whitelist, rather than proving a logical formula about memory access patterns. This structural approach is possible because WebAssembly's import mechanism is explicit and enumerable, with no implicit capabilities.

Appel~\cite{appel2001foundational} extended PCC to foundational proof-carrying code, where the proof is expressed in a minimal logic rather than trusting a complex verification condition generator. Our approach is analogous: the purity proof is simple enough (import enumeration and classification) that verification is trivially correct, requiring no complex inference engine.

Morrisett et al.~\cite{morrisett1999typed} demonstrated that type-theoretic safety properties can be preserved through compilation to typed assembly language (TAL). Our approach pursues a related but distinct strategy: rather than preserving a type-level purity property through compilation, we verify the \emph{absence} of effect capabilities in the compiled artifact directly. The purity guarantee is a post-compilation structural property of the bytecode, not a type-level invariant propagated from source.

\subsection{WebAssembly Capability Model}
\label{sec:bg-wasm}

WebAssembly (WASM)~\cite{haas2017wasm, rossberg2024wasm} is a portable bytecode format designed as a compilation target for high-level languages. Its security model is based on \emph{capability restriction}: a WASM module can only access capabilities explicitly provided to it through imports. Specifically:

\begin{enumerate}[nosep]
  \item A WASM module declares its imports explicitly in its binary format.
  \item The host environment provides implementations for these imports at instantiation time.
  \item The module \emph{cannot} call any function not provided through this mechanism.
  \item There is no equivalent to \texttt{:erlang.apply/3}, \texttt{Code.eval\_string/1}, or dynamic module loading.
  \item Memory is linear and bounded; there are no pointers to host memory or system calls.
\end{enumerate}

This capability model is not a convention; it is enforced by the WASM runtime's validation algorithm, which rejects modules that reference undeclared imports~\cite{haas2017wasm}. The restriction is a \emph{structural property} of the bytecode format, not a policy layered on top.

\paragraph{Capability security lineage.} WASM's import-based capability model stands in a tradition originating with Dennis and Van Horn's~\cite{dennis1966programming} foundational work on capability-based addressing, in which a process can only access resources for which it holds an explicit capability token. Miller~\cite{miller2006robust} refined this into the \emph{object-capability model}, where authority is conveyed exclusively through object references: a program can only affect the world through references it has been explicitly granted. WASM instantiates this principle at the bytecode level: a module's imports are its capabilities, the host controls which capabilities are granted, and no ambient authority exists. Our certified purity architecture exploits this property: by granting only pure capabilities (the whitelist $\mathcal{W}$), we obtain effect isolation as a direct consequence of the capability discipline.

\subsection{Bytecode Verification}
\label{sec:bg-bytecode}

The Java Virtual Machine~\cite{lindholm1999java} performs bytecode verification at class loading time, checking type safety, stack discipline, and access control. Java's verifier ensures that bytecode cannot forge references, access private fields from unauthorized classes, or violate the type system. However, Java bytecode verification does not address \emph{effect isolation}: a verified Java class can still perform arbitrary I/O through the standard library (file access, network operations, process execution) without restriction.

Our approach shares Java's strategy of load-time verification but targets a different property. Java's verifier checks type safety within an effectful language. Our verifier checks effect \emph{absence} within a restricted bytecode format. The combination of a capability-restricted bytecode format (WASM) with a purity-focused verification algorithm yields a guarantee that Java's verifier cannot provide: not merely type-safe execution, but provably effect-free execution.

Leroy's CompCert~\cite{leroy2009compcert} formally verified that compilation preserves semantic properties from source to machine code. Our strategy is complementary: rather than verifying that the compiler preserves purity, we verify that the \emph{compiled artifact lacks effect capabilities}. This post-compilation verification is compiler-agnostic: it holds regardless of how the WASM binary was produced.

\section{Threat Model and Bypass Analysis}
\label{sec:threat}

\subsection{Adversary Model}

We consider an \emph{adversarial executor author}: an entity that writes executor modules with the intent to bypass the governance pipeline and perform ungoverned effects. The adversary operates within the following constraints:

\begin{enumerate}[nosep]
  \item The adversary can write arbitrary code within an executor module.
  \item The adversary's module must pass static import graph analysis (otherwise it is rejected before execution).
  \item The adversary knows the architecture of the governance system.
  \item The adversary's goal is to perform an I/O effect (network request, file write, process spawn, etc.) that does not pass through the directive interpreter's governance pipeline.
\end{enumerate}

This is a stronger adversary than the typical developer who accidentally omits governance (the threat addressed by convention-based systems). Our adversary actively seeks to bypass governance while appearing compliant.

\subsection{Bypass Classification}

Table~\ref{tab:bypasses} classifies the five bypass mechanisms available on the BEAM virtual machine.

\begin{table}[ht]
\centering
\caption{BEAM bypass classes for the pure module constraint}
\label{tab:bypasses}
\small
\begin{tabular}{@{}p{0.5cm}p{2.8cm}p{4.5cm}p{3.8cm}@{}}
\toprule
\textbf{\#} & \textbf{Mechanism} & \textbf{Example} & \textbf{Static Detectability} \\
\midrule
1 & Dynamic dispatch & \texttt{:erlang.apply(m, f, a)} where \texttt{m} is computed at runtime & Detectable if \texttt{:erlang.apply} is called; undetectable if the atom is constructed indirectly \\
\addlinespace
2 & Code evaluation & \texttt{Code.eval\_string(s)} where \texttt{s} contains I/O calls & Detectable if \texttt{Code.eval\_string} is called; content of \texttt{s} is opaque to static analysis \\
\addlinespace
3 & NIFs & \texttt{:erlang.load\_nif(path, info)} loading a shared library with I/O & Detectable if NIF loading is called; NIF behavior is opaque \\
\addlinespace
4 & Ports & \texttt{:erlang.open\_port(\{:spawn, cmd\}, opts)} & Detectable if \texttt{open\_port} is called; port creation can be indirect \\
\addlinespace
5 & Dynamic module loading & \texttt{:code.load\_binary(m, f, b)} loading arbitrary bytecode & Detectable if \texttt{:code} functions are called; module content is opaque \\
\bottomrule
\end{tabular}
\end{table}

\subsection{Why Static Analysis Is Insufficient}

Static analysis is \emph{necessary but insufficient} for enforcing purity on the BEAM. We establish this through two observations.

\paragraph{Observation 1: Pattern detection.} A static analyzer can detect direct calls to known dangerous functions (\texttt{:erlang.apply/3}, \texttt{Code.eval\_string/1}, \texttt{:erlang.open\_port/2}, \texttt{:code.load\_binary/3}, \texttt{:erlang.load\_nif/2}). Such an analyzer rejects executors containing these call patterns. This catches the majority of bypass attempts.

\paragraph{Observation 2: Undecidability.} By Rice's theorem~\cite{rice1953classes}, for any non-trivial semantic property $P$ of programs, it is undecidable whether an arbitrary program satisfies $P$. ``The executor performs no I/O effects at runtime'' is a non-trivial semantic property (some programs satisfy it, others do not). Therefore no algorithm can decide this property for all possible executor programs. A sound overapproximation (rejecting all programs that \emph{might} violate purity) is theoretically constructible, but the BEAM's pervasive dynamism (runtime atom construction, universal message passing, hot code loading, and open module namespaces) means such an analysis would conservatively reject most idiomatic Elixir executors. The false positive rate renders the approach impractical as a governance mechanism.

\paragraph{The gap.} Static analysis catches \emph{violations}, programs that demonstrably break the pure module constraint. It cannot \emph{prevent} violations, that is, guarantee that no expressible program breaks the constraint. The distinction matters for adversarial authors: given any finite set of patterns the analyzer checks, a sufficiently creative adversary can construct a program that bypasses those patterns while still performing I/O.

Consider a concrete example. An executor module imports only standard library pure functions. At runtime, it computes the atom \texttt{:httpc} through string manipulation (\texttt{String.to\_atom("ht" <> "tc")}), then invokes \texttt{:erlang.apply(:httpc, :request, [url])}. The import graph shows no I/O modules. A pattern-matching analyzer that searches for \texttt{:erlang.apply} catches this variant. But the adversary can obscure the call further: storing the \texttt{:erlang} atom in a data structure, passing it through several function calls, and invoking the apply only in a specific conditional branch. The arms race between obfuscation and detection has no principled termination~\cite{rice1953classes}.

The certified purity architecture terminates this arms race. Instead of detecting bypass patterns with ever-more-sophisticated analysis, it eliminates the \emph{capability} to bypass. An executor compiled to the restricted target \emph{cannot} invoke \texttt{:erlang.apply/3} because \texttt{:erlang.apply/3} is not in the host function whitelist. The distinction is between \emph{detection} (finding programs that bypass) and \emph{prevention} (making bypass structurally impossible).

\section{Verified Purity Architecture}
\label{sec:architecture}

\subsection{Restricted Compilation Target}
\label{sec:restricted-target}

\begin{definition}[Executor IR]
\label{def:executor-ir}
The \emph{Executor IR} (Intermediate Representation) is a WebAssembly module that conforms to the following restriction: every import in the module's import section belongs to the \emph{host function whitelist} $\mathcal{W}$.
\end{definition}

\begin{definition}[Host Function Whitelist]
\label{def:whitelist}
The host function whitelist $\mathcal{W}$ is a finite, enumerated set of host functions partitioned into two classes:
\begin{align*}
\mathcal{W}_{\mathrm{data}} = \{&\;\mathit{mem\_alloc},\; \mathit{mem\_free},\; \mathit{mem\_copy},\; \\
  &\;\mathit{str\_concat},\; \mathit{str\_slice},\; \mathit{str\_len},\; \mathit{str\_encode\_utf8},\; \\
  &\;\mathit{int\_add},\; \mathit{int\_sub},\; \mathit{int\_mul},\; \mathit{int\_div},\; \\
  &\;\mathit{float\_add},\; \mathit{float\_sub},\; \mathit{float\_mul},\; \mathit{float\_div},\; \\
  &\;\mathit{list\_new},\; \mathit{list\_push},\; \mathit{list\_get},\; \mathit{list\_len},\; \\
  &\;\mathit{map\_new},\; \mathit{map\_put},\; \mathit{map\_get},\; \mathit{map\_keys},\; \\
  &\;\mathit{json\_encode},\; \mathit{json\_decode},\; \\
  &\;\mathit{ctx\_get},\; \mathit{ctx\_get\_step\_output},\; \mathit{ctx\_get\_input}\; \}
\end{align*}

\begin{align*}
\mathcal{W}_{\mathrm{dir}} = \{&\;\mathit{directive\_llm\_call},\; \mathit{directive\_llm\_call\_stream},\; \\
  &\;\mathit{directive\_http\_request},\; \mathit{directive\_file\_op},\; \\
  &\;\mathit{directive\_call\_machine},\; \mathit{directive\_memory\_op},\; \\
  &\;\mathit{directive\_db\_op},\; \mathit{directive\_exec\_op},\; \\
  &\;\mathit{directive\_emit\_event},\; \mathit{directive\_broadcast}\; \}
\end{align*}

\[
\mathcal{W} = \mathcal{W}_{\mathrm{data}} \cup \mathcal{W}_{\mathrm{dir}}
\]

Functions in $\mathcal{W}_{\mathrm{data}}$ perform pure data operations: memory management within the WASM linear memory, string manipulation, arithmetic, collection operations, JSON serialization, and context reads. Functions in $\mathcal{W}_{\mathrm{dir}}$ are directive constructors: they build directive data structures and append them to the executor's output directive list. Neither class performs I/O.
\end{definition}

\begin{remark}
The directive constructors in $\mathcal{W}_{\mathrm{dir}}$ \emph{do not execute} the directives. They construct data structures representing intended effects and append them to an output buffer. The host implementation of $\mathit{directive\_http\_request}$, for example, creates an $\mathit{HTTPRequest}$ directive struct and pushes it onto the directive list. It does not issue an HTTP request. The directive interpreter processes the directive list after the executor returns.
\end{remark}

\paragraph{WASI exclusion.} The WebAssembly System Interface (WASI)~\cite{wasi2024} extends WASM with host-provided capabilities for filesystem access, network sockets, clocks, and random number generation. WASI imports are \emph{not} members of $\mathcal{W}$. Any module importing WASI functions (e.g., \texttt{wasi\_snapshot\_preview1:fd\_write}) will fail the verification gate at classification validation (Definition~\ref{def:gate}, step~5), because those imports classify as $\mathit{disallowed}$. This is by design: WASI exists precisely to grant effect capabilities, and effect capabilities are exactly what the certified purity architecture excludes. The exclusion requires no special-case logic; it falls out of the whitelist discipline.

\begin{proposition}[Whitelist Purity]
\label{prop:whitelist-purity}
Every function in $\mathcal{W}$ is pure: it is total, deterministic (same inputs produce same outputs), and side-effect-free (it performs no I/O, modifies no global state outside WASM linear memory, and communicates no information to external systems).
\end{proposition}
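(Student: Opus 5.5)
The argument is a finite case analysis over the enumerated set $\mathcal{W}=\mathcal{W}_{\mathrm{data}}\cup\mathcal{W}_{\mathrm{dir}}$ of Definition~\ref{def:whitelist}. Since $\mathcal{W}$ is finite and closed, it suffices to show that each host function, read as a mathematical function from its arguments and the state it is allowed to touch, has the three properties: totality, determinism, and absence of side effects. First I fix the semantic model. Each host function $h$ is a state transformer
\[
h : \mathit{Args}\times \mathit{Mem}\times \mathit{Ctx}_{\mathrm{ro}}\times \mathit{Buf} \to (\mathit{Ret}\cup\{\mathit{trap}\})\times \mathit{Mem}\times \mathit{Buf},
\]
where $\mathit{Mem}$ is the instance's linear memory, $\mathit{Ctx}_{\mathrm{ro}}$ is the read-only snapshot of $(\mathit{StepConfig},\mathit{Context})$ passed to $\mathit{plan}$ (Definition~\ref{def:executor}), and $\mathit{Buf}$ is the per-invocation directive output buffer. ``Side-effect-free'' then means that the only components written are $\mathit{Mem}$ and $\mathit{Buf}$, both local to the invocation. ``Total'' means every input yields either a value or a well-defined trap or error code, never divergence or a host crash. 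Treating traps as values is what lets partial operations such as $\mathit{int\_div}$ by zero, $\mathit{list\_get}$ out of range, or $\mathit{json\_decode}$ on malformed input count as total.

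Next I go through the groups of $\mathcal{W}_{\mathrm{data}}$:
\begin{itemize}
\item The arithmetic and string functions are mathematical functions on their arguments. For the floating-point functions I appeal to IEEE-754 determinism with canonicalized NaNs, as WASM's deterministic profile requires.
\item The memory functions $\mathit{mem\_alloc}$, $\mathit{mem\_free}$ and $\mathit{mem\_copy}$ write only into $\mathit{Mem}$. Their results are deterministic given the allocator's state, which itself lives in $\mathit{Mem}$.
\item The collection functions and $\mathit{json\_encode}$/$\mathit{json\_decode}$ are pure structural transformations. For $\mathit{map\_keys}$ and $\mathit{json\_encode}$, determinism needs a fixed canonical key ordering, so I take that as an implementation obligation.
\item The $\mathit{ctx\_get}$ family only reads $\mathit{Ctx}_{\mathrm{ro}}$, which is an input to $\mathit{plan}$ and not global mutable state, so these are pure relative to the $\mathit{plan}$ signature.
\end{itemize}

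For $\mathcal{W}_{\mathrm{dir}}$, I use the Remark following Definition~\ref{def:whitelist}. Each constructor builds a value of the sum type $D$ (Definition~\ref{def:directive}) from its arguments and appends it to $\mathit{Buf}$. No I/O occurs, since execution happens only later in $\mathit{interpret}$ (Definition~\ref{def:interpreter}). Determinism and totality follow from the fact that constructing a tagged record is injective data building. The final step composes these results: because $\mathit{Buf}$ is returned as part of $\mathit{Result}\times[\mathit{Directive}]$, appending to it is an output of the pure function $\mathit{plan}$, not a side effect.

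The main obstacle is that the proposition is really a claim about the host implementations, which are Elixir code running on the very BEAM whose dynamism motivated the paper. Purity of $\mathcal{W}$ therefore cannot be established by WASM structure alone. I would handle this explicitly by stating the result relative to a trusted implementation: each host function's body is audited (or restricted by the same import-graph check) to call only pure BEAM primitives and nothing from $\mathit{:erlang.apply}$, ports, NIFs, or the like. This set is small and fixed, unlike the open-ended set of executors, so the audit is feasible, and I would flag these implementations as part of the Trusted Computing Base of Section~\ref{sec:tcb}. The remaining subtle points are resource exhaustion in $\mathit{mem\_alloc}$ and the BEAM-side data structures. I would treat memory exhaustion as a deterministic trap given a fixed memory bound, so that totality still holds.
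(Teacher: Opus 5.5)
Your proposal is correct and takes essentially the paper's route. Both arguments are a per-category case analysis over $\mathcal{W}_{\mathrm{data}}\cup\mathcal{W}_{\mathrm{dir}}$ that counts traps and error values as deterministic termination, treats linear-memory mutation and the directive output buffer as invocation-local, and treats context reads as reads of a frozen input. Your explicit state-transformer model, the canonical-ordering and NaN obligations, and the conditioning on audited host implementations sharpen the same argument; the paper places that last trust assumption in the TCB (Definition~\ref{def:tcb}) and its Limitations discussion rather than inside the proof.
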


\begin{proof}
We verify purity per host function category. Our notion of purity is \emph{I/O purity}: a function is pure if its return value is determined entirely by its arguments, it has no observable side effects outside the WASM linear memory instance, and it terminates for all valid inputs. WASM traps (e.g., from memory exhaustion in \texttt{mem\_alloc}) constitute deterministic abnormal termination, not non-termination; purity is preserved because the trap outcome is determined entirely by the inputs and memory state, produces no side effects, and is detectable by the host. Mutation of the executor's own linear memory is permitted because it is not observable outside the executor.

\emph{Memory operations} ($\mathcal{W}_{\mathrm{data}}$, memory category): These operate on the WASM module's linear memory, an isolated byte array allocated per-instance by the WASM specification (Section~5.3 of~\cite{haas2017wasm}). No cross-instance effects are possible. The WASM linear memory allocator maintains internal bookkeeping, but this state is (a)~confined to the executor's memory instance, (b)~not observable by any other component, and (c)~deterministic given the same allocation sequence.

\emph{String and arithmetic operations} ($\mathcal{W}_{\mathrm{data}}$, compute category): Pure functions over immutable values. No state, no I/O. These include encoding/decoding (UTF-8), numeric conversions, and mathematical operations that map inputs to outputs with no side channels.

\emph{JSON operations} ($\mathcal{W}_{\mathrm{data}}$, serialization category): Parsing returns a result type (success with parsed value, or error with description). Errors are values, not exceptions. Serialization is deterministic: the same data structure produces the same JSON string. No I/O is performed.

\emph{Collection operations} ($\mathcal{W}_{\mathrm{data}}$, data structure category): Pure transformations on immutable data structures (list construction, map lookup, filtering). No shared state between invocations; each operation returns a new value without modifying the input.

\emph{Context reads} ($\mathcal{W}_{\mathrm{data}}$, context category): Pure reads from an immutable data structure provided by the host at invocation start. The context is frozen before execution begins and cannot be modified by the executor.

\emph{Directive constructors} ($\mathcal{W}_{\mathrm{dir}}$): These allocate a data structure (the directive record) and append it to the executor's directive output buffer. The buffer is local to the current invocation and read only by the interpreter \emph{after} the executor returns. Directive constructors do not perform the described effect; they produce a \emph{description} of the effect. They modify only the output buffer, which is (a)~local to the invocation, (b)~not readable by the executor after writing, and (c)~consumed exactly once by the interpreter. Therefore they are side-effect-free in the relevant sense: they produce data, not effects.
\end{proof}

\begin{theorem}[Structural Purity]
\label{thm:structural-purity}
An executor compiled to the Executor IR satisfies the pure module constraint (Definition~\ref{def:pure-constraint}) by construction.
\end{theorem}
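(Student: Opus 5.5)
The approach is to reduce the pure module constraint (Definition~\ref{def:pure-constraint}) to a statement about the \emph{capability set} of a WASM module, and then discharge it using the WASM capability model (Section~\ref{sec:bg-wasm}) together with Proposition~\ref{prop:whitelist-purity}.

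First, we make $\mathit{Type}(E)$ precise for an executor $E$ compiled to the Executor IR. It is the set of functions reachable from $E$'s code. Under WASM semantics this set consists of two kinds of function. The first kind is the functions defined in $E$'s own code section, together with any table entries, since tables can only be populated with functions defined in or imported by $E$. The second kind is the host functions bound to $E$'s import section at instantiation. By properties (1)--(4) of the capability model, which are enforced by the validator, there is no other route to a host function: no ambient authority, no dynamic dispatch to unimported names, no code evaluation, and no module loading from inside the module. Instructions such as \texttt{call\_indirect} resolve only through tables whose entries are themselves drawn from these two kinds.

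Second, we establish that every function in $\mathit{Type}(E)$ is pure, by induction. The imported functions are the base case: by Definition~\ref{def:executor-ir}, every import lies in $\mathcal{W}$, and Proposition~\ref{prop:whitelist-purity} makes each of them pure. For the defined functions, we argue by structural induction on the body, or equivalently on the small-step reduction sequence. Core WASM instructions (arithmetic, control flow, local and global access, and linear-memory loads and stores) act only on the instance's own state. By property (5), linear memory is isolated, so these instructions perform no I/O and are not observable outside the instance. A \texttt{call} or \texttt{call\_indirect} either targets a defined function, which is pure by the induction hypothesis, or targets an import, which is pure by the base case. Purity is closed under composition, as in the prior paper's argument that composing pure functions gives a pure function. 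It follows that every defined function, and hence \texttt{plan} itself, is I/O-pure in the sense fixed in the proof of Proposition~\ref{prop:whitelist-purity}. Traps count as deterministic abnormal termination, as that proof already stipulates.

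The main obstacle is making the capability-confinement step rigorous: showing that the only effectful transitions available to a WASM instance are host calls through its declared imports. Recursion and loops also cause a termination subtlety for the induction; I would handle this by inducting on reduction steps rather than on call depth. For confinement, my first move is to cite the WASM formal semantics~\cite{haas2017wasm, rossberg2024wasm}. There, the store an instance can touch is determined by its module instance, and the only transition that invokes host code is the one for host function instances, which come exclusively from imports. I would state this as an explicit assumption that the runtime correctly implements validation and instantiation. This assumption belongs to the Trusted Computing Base (Section~\ref{sec:tcb}), so that the phrase ``by construction'' means: no property of $E$'s source or compiler is needed beyond membership of its imports in $\mathcal{W}$.
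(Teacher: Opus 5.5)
Your proposal is correct and follows essentially the same route as the paper. You confine everything callable to $E$'s own code plus its declared imports, with \texttt{call\_indirect} tables populated only from those two sources; you get purity of the imports from Definition~\ref{def:executor-ir} and Proposition~\ref{prop:whitelist-purity}; and you close with an induction plus the prior paper's composition argument. Inducting on reduction steps instead of the paper's call depth is a sound refinement for recursive code, but it shows only that no step performs I/O. That is the reading of Definition~\ref{def:pure-constraint} actually needed, but it does not give the termination clause of the purity notion from Proposition~\ref{prop:whitelist-purity}, which internal loops can violate, as the paper itself concedes in Section~\ref{sec:limitations}.
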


\begin{proof}
Let $E_w$ be an executor compiled to the Executor IR (a WASM module). By the WASM specification~\cite{haas2017wasm, rossberg2024wasm}, $E_w$ can only invoke functions declared in its import section. By Definition~\ref{def:executor-ir}, every import belongs to $\mathcal{W}$. By Proposition~\ref{prop:whitelist-purity}, every function in $\mathcal{W}$ is pure.

The WASM execution model provides no mechanism for $E_w$ to invoke functions outside its declared imports:
\begin{enumerate}[nosep]
  \item WASM has no dynamic dispatch instruction (no equivalent to \texttt{:erlang.apply/3}).
  \item WASM has no code evaluation instruction (no equivalent to \texttt{Code.eval\_string/1}).
  \item WASM modules cannot load native code (no equivalent to NIFs).
  \item WASM modules cannot spawn external processes (no equivalent to ports).
  \item WASM modules cannot load other modules at runtime (no equivalent to \texttt{:code.load\_binary/3}).
\end{enumerate}

The indirect call instruction (\texttt{call\_indirect}) invokes a function through a table. Table entries are populated by two mechanisms: (1)~the host, during module instantiation, may place imported functions into the table, and (2)~\texttt{elem} segments in the module may place references to functions defined within the module or declared in its import section. In both cases, every reachable function is either a host function from $\mathcal{W}$ (pure by Proposition~\ref{prop:whitelist-purity}) or a module-internal function. Module-internal functions can only invoke imports from $\mathcal{W}$ and other internal functions; by induction on the call depth, all reachable functions are pure. Therefore indirect calls cannot escape the purity boundary.

Therefore every function callable by $E_w$ is pure. By the same argument as the prior paper's Proposition~3~\cite{mccann2026structural}: composition of pure functions is pure, so any function definable in $E_w$ is pure. $E_w$ satisfies the pure module constraint.
\end{proof}

\paragraph{Significance.} The proof does not depend on analyzing the executor's code. It depends only on the structural properties of the WASM format and the purity of the whitelist. No matter what code the executor contains, the purity guarantee holds, because the capability to perform effects does not exist in the execution environment.

\subsection{Trusted Computing Base}
\label{sec:tcb}

The structural purity guarantee established by Theorem~\ref{thm:structural-purity} holds \emph{relative to} a set of trusted components. We make this trust boundary explicit.

\begin{definition}[Trusted Computing Base]
\label{def:tcb}
The Trusted Computing Base (TCB) for the certified purity architecture is the set of components that must operate correctly for the structural effect isolation guarantee to hold:
\begin{enumerate}[nosep]
  \item The \textbf{WASM runtime} (Wasmtime): must correctly isolate WASM modules and not provide undeclared capabilities.
  \item The \textbf{host function implementations}: each function exposed through $\mathcal{W}$ must be pure (total, deterministic, side-effect-free).
  \item The \textbf{whitelist definition} $\mathcal{W}$: must contain only pure functions. A function in $\mathcal{W}$ whose implementation performs effects would violate the premise of Proposition~\ref{prop:whitelist-purity}.
  \item The \textbf{verification gate}: must correctly validate signatures, artifact hashes, and whitelist currency, and must be the sole path to executor invocation.
  \item The \textbf{directive interpreter}: must correctly apply the governance pipeline before performing effects.
\end{enumerate}
\end{definition}

\begin{remark}[TCB as strength, not weakness]
Every secure architecture has a Trusted Computing Base. Hardware security architectures trust the CPU. Operating systems trust the kernel. Cryptographic systems trust their mathematical assumptions. The question is not whether a TCB exists, but how large it is and whether it is explicit.

The certified purity architecture \emph{minimizes} the TCB by moving trust from unbounded executor code to a small, reviewable runtime surface. Before certified purity, the trust surface includes every executor module ($\sim$37 modules, $\sim$17,000 lines) plus every future and third-party executor. After certified purity, the trust surface is the five TCB components above, each auditable, each signable, each subject to independent review. This is a reduction from unbounded to bounded trust.
\end{remark}

\paragraph{Formal scoping.} Theorem~\ref{thm:structural-purity} establishes: \emph{if} $\mathcal{W}$ contains only pure functions and the WASM runtime correctly enforces the capability model, \emph{then} every executor compiled to the Executor IR is pure. The ``if'' clause is the TCB assumption. The conclusion is structural. Stating the TCB explicitly transforms the guarantee from an implicit claim to a precise security model: structural effect isolation holds relative to an uncompromised runtime and correctly implemented host function layer.

\paragraph{TCB hardening.} The TCB can be further reduced through three strategies: (A)~\emph{structural constraints on host functions}, restricting host function implementations from importing I/O-capable modules so the directive interpreter remains the sole effect site; (B)~\emph{physical separation}, splitting the runtime into a pure execution host process (WASM + host functions, no I/O capability) and an effect interpreter process, enforcing purity at the OS process boundary; (C)~\emph{signed runtime builds}, including runtime binary hashes, whitelist hashes, and interpreter version hashes in attestation records, making runtime integrity part of the cryptographic chain. These strategies are future hardening, discussed in Section~\ref{sec:discussion}.

\subsection{Purity Certificate}
\label{sec:certificate}

A purity certificate is a cryptographic proof artifact that binds an executor binary to its purity classification.

\begin{definition}[Purity Proof]
\label{def:purity-proof}
A purity proof $\pi$ for an Executor IR module $E_w$ is a structured record:
\[
\pi = (\mathit{imports}(E_w),\; \mathit{classify}(\mathit{imports}(E_w)),\; \mathit{conclusion})
\]
where:
\begin{itemize}[nosep]
  \item $\mathit{imports}(E_w)$ is the complete list of imports declared in $E_w$'s binary, extracted by parsing the WASM import section.
  \item $\mathit{classify}(\mathit{imports}(E_w))$ maps each import to its classification: $\mathit{pure\_data}$ (in $\mathcal{W}_{\mathrm{data}}$), $\mathit{pure\_directive}$ (in $\mathcal{W}_{\mathrm{dir}}$), or $\mathit{disallowed}$ (not in $\mathcal{W}$).
  \item $\mathit{conclusion} \in \{\mathit{pure},\; \mathit{impure}\}$: $\mathit{pure}$ if and only if all imports classify as $\mathit{pure\_data}$ or $\mathit{pure\_directive}$.
\end{itemize}
\end{definition}

\begin{definition}[Purity Certificate]
\label{def:purity-cert}
A purity certificate $C$ for an Executor IR module $E_w$ is:
\[
C = (\mathit{artifact\_hash},\; \mathit{proof\_hash},\; \sigma,\; \mathit{metadata})
\]
where:
\begin{itemize}[nosep]
  \item $\mathit{artifact\_hash} = \mathrm{SHA256}(E_w)$, the hash of the executor binary.
  \item $\mathit{proof\_hash} = \mathrm{SHA256}(\pi)$, the hash of the purity proof.
  \item $\sigma = \mathrm{Ed25519Sign}(k_{\mathrm{priv}},\; \mathit{artifact\_hash} \,\|\, \mathit{proof\_hash})$, a signature over the concatenation of both hashes, using the certifier's Ed25519 private key $k_{\mathrm{priv}}$.
  \item $\mathit{metadata}$ includes the certifier identity, timestamp, whitelist version, and certificate format version.
\end{itemize}
\end{definition}

The Ed25519 signature scheme~\cite{bernstein2012ed25519} provides 128-bit security with 64-byte signatures and fast verification (tens of microseconds on modern hardware).

\begin{theorem}[Certificate Integrity]
\label{thm:cert-integrity}
A valid purity certificate for artifact $E_w$ cannot be used for a modified artifact $E'_w \neq E_w$.
\end{theorem}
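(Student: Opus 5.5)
The plan is a reduction to the security of the two cryptographic primitives in Definition~\ref{def:purity-cert}: collision resistance of SHA256 and existential unforgeability under chosen-message attack (EUF-CMA) of Ed25519. First I will make ``used for'' precise. A certificate $C = (h_a, h_p, \sigma, \mathit{metadata})$ is \emph{accepted} for an artifact $X$ iff three conditions hold:
\begin{itemize}[nosep]
  \item $\mathrm{SHA256}(X) = h_a$;
  \item $\mathrm{Ed25519Verify}(k_{\mathrm{pub}}, h_a \,\|\, h_p, \sigma)$ succeeds;
  \item $h_p = \mathrm{SHA256}(\pi)$ for the proof $\pi$ recomputed from $X$, or supplied with it.
\end{itemize}
The theorem then reads: no probabilistic polynomial-time adversary, given a valid $C$ for $E_w$, can produce $E'_w \neq E_w$ such that $C$, or any certificate derived from $C$ without $k_{\mathrm{priv}}$, is accepted for $E'_w$, except with negligible probability.

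\textbf{Case A: reusing $C$ verbatim.} Acceptance forces $\mathrm{SHA256}(E'_w) = h_a = \mathrm{SHA256}(E_w)$ with $E'_w \neq E_w$. That is a SHA256 collision. So an adversary succeeding in this case yields a collision finder, and this happens with negligible probability under collision resistance. Second-preimage resistance already suffices here, since $E_w$ is fixed. I will note this weaker assumption.

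\textbf{Case B: altering the hash fields.} Here the adversary replaces $h_a$ by $h_a' = \mathrm{SHA256}(E'_w)$, keeping or adjusting $h_p$, and keeps or modifies $\sigma$. Suppose $h_a' \,\|\, h_p' \neq h_a \,\|\, h_p$. Then an accepting $\sigma'$ is a valid signature on a message never signed by the certifier for this artifact, which is an Ed25519 forgery. Concatenation of fixed-length 32-byte hashes is injective, so distinct pairs give distinct messages. I will state this explicitly, because it is what rules out splicing ambiguity. One subtlety: the certifier may legitimately have signed other artifacts. The reduction therefore runs in the EUF-CMA game, with the signing oracle modeling the certifier's past certificates. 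A success for the adversary is then a forgery unless the certifier itself certified $E'_w$, and in that case $E'_w$ has its own legitimate certificate rather than a transferred one.

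\textbf{Combining and the main obstacle.} A union bound gives a success probability of at most the collision advantage plus the forgery advantage, which is negligible. The main obstacle is Case B's boundary: pinning down what counts as ``using a certificate for $E'_w$'' versus legitimately holding a fresh certificate. A second part of the same obstacle is making explicit that the gate recomputes $\mathrm{SHA256}$ of the actual loaded bytes rather than trusting a supplied hash. I will handle this by appealing to the verification-gate clause of the TCB (Definition~\ref{def:tcb}, item~4) and stating the result as conditional on secrecy of $k_{\mathrm{priv}}$.
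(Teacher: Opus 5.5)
Your reduction is correct and uses the same two assumptions as the paper, but the argument is decomposed differently. The paper considers only verbatim reuse of $C$ and chains the two primitives. Collision resistance gives $\mathrm{SHA256}(E'_w) \neq \mathit{artifact\_hash}$. So the message the gate feeds to $\mathrm{Ed25519Verify}$, which is built from recomputed hashes in step~1 of Definition~\ref{def:gate}, differs from the signed one. Unforgeability then makes the original $\sigma$ fail. You instead split on whether the adversary keeps the artifact-hash field. If it is kept, the artifact-binding check alone forces a SHA-256 collision, and the signature plays no role. If it is changed, any accepting signature is on a message the certifier never signed (by injectivity of the fixed-length concatenation), hence an EUF-CMA forgery; a union bound combines the two. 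State the split that way rather than as ``verbatim'' versus ``altered hash fields'': a certificate with only $\sigma$ or $h_p$ edited falls in neither of your cases as written, although your Case~A argument handles it. Your route gives a stronger statement, covering edited certificates and not just reused ones, and it treats the certifier's other certificates explicitly through the signing oracle. The paper's route is shorter and places the rejection at the signature check.

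One correction: drop the remark that second-preimage resistance suffices in Case~A. In the paper's threat model the adversary is the executor author, so $E_w$ is not fixed independently of the attacker. They can build a colliding pair $(E_w, E'_w)$ in advance, submit the pure $E_w$ for certification, and then present $C$ together with $E'_w$. Defeating that attack needs full collision resistance, which is what the paper assumes and what the rest of your argument uses.
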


\begin{proof}
Let $C = (\mathit{artifact\_hash}, \mathit{proof\_hash}, \sigma, \mathit{metadata})$ be a valid certificate for $E_w$.

The verifier computes $h' = \mathrm{SHA256}(E'_w)$. Since $E'_w \neq E_w$, by the collision resistance of SHA-256, $h' \neq \mathit{artifact\_hash}$ with overwhelming probability (probability of collision: $< 2^{-128}$).

The verifier checks that $\mathrm{Ed25519Verify}(k_{\mathrm{pub}},\; h' \,\|\, \mathit{proof\_hash},\; \sigma) = \mathit{true}$. Since $\sigma$ was computed over $\mathit{artifact\_hash} \,\|\, \mathit{proof\_hash}$ and $h' \neq \mathit{artifact\_hash}$, the verification message differs. By the existential unforgeability of Ed25519 under chosen-message attacks, verification fails with overwhelming probability.

Therefore $C$ is not valid for $E'_w$.
\end{proof}

\begin{corollary}[No Certificate Transfer]
\label{cor:no-transfer}
An adversary who obtains a valid purity certificate for a pure executor $E_w$ cannot use that certificate for a modified executor $E'_w$ that imports disallowed functions, without forging an Ed25519 signature.
\end{corollary}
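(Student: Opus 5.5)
The corollary should follow as a reduction to Theorem~\ref{thm:cert-integrity}, with one extra point made explicit. The adversary can try to reuse the old signature, but it can also try to recompute the hashes, including a fresh purity proof. The plan is to fix a valid certificate $C = (\mathit{artifact\_hash}, \mathit{proof\_hash}, \sigma, \mathit{metadata})$ for the pure executor $E_w$, and a modified executor $E'_w$ whose import section contains at least one import outside $\mathcal{W}$. Since $E_w$ is in the Executor IR, all of its imports lie in $\mathcal{W}$. So $E'_w$ and $E_w$ differ in their import sections, and in particular $E'_w \neq E_w$.

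\textbf{Case 1: the old certificate is presented unchanged.} The verifier recomputes $h' = \mathrm{SHA256}(E'_w)$. Theorem~\ref{thm:cert-integrity} then applies directly: $h' \neq \mathit{artifact\_hash}$ except with collision probability, so $\mathrm{Ed25519Verify}(k_{\mathrm{pub}}, h' \,\|\, \mathit{proof\_hash}, \sigma)$ fails unless $\sigma$ happens to be a valid signature on a message never signed by the certifier.

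\textbf{Case 2: the adversary presents a modified certificate $C'$.} Here $C'$ carries $h'$ and some proof hash $p'$. If $p'$ is the hash of an honest proof $\pi' = (\mathit{imports}(E'_w), \mathit{classify}(\cdot), \mathit{conclusion})$, then by Definition~\ref{def:purity-proof} that proof must conclude $\mathit{impure}$, because a disallowed import classifies as $\mathit{disallowed}$. The gate rejects such a certificate on its conclusion. If instead $p'$ hashes a dishonest proof claiming $\mathit{pure}$, the honest certifier never signed it. In either case, acceptance requires a valid $\sigma'$ on the message $h' \,\|\, p'$. That message differs from every message the certifier signed, namely hashes of genuine artifacts paired with genuine proofs, unless a SHA-256 collision occurs. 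Producing such a $\sigma'$ is, by definition, an existential forgery against Ed25519.

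\textbf{The main obstacle is the formal reduction in Case 2.} It has to show that any adversary who gets a transferred certificate accepted yields either a SHA-256 collision finder or an EUF-CMA forger, with advantage bounded by the sum of the two. The subtle step is ruling out that $h' \,\|\, p'$ equals a message the certifier legitimately signed for some other pure artifact $E''_w$. I would handle this by noting that such equality forces $\mathrm{SHA256}(E'_w) = \mathrm{SHA256}(E''_w)$. Because $E'_w$ contains a disallowed import and $E''_w$ does not, $E'_w \neq E''_w$, so this is a collision. This closes the argument, and the corollary holds under the stated assumption that Ed25519 is not forged.
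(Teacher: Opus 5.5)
Your proposal is correct. For the statement as posed (reusing the unchanged certificate), your Case~1 is exactly the paper's route: the paper gives no separate proof and reads the corollary off Theorem~\ref{thm:cert-integrity}, with the disallowed import guaranteeing $E'_w \neq E_w$ just as you observe.

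Your Case~2 and final collision step extend that argument rather than replace it, and they repair something the paper glosses over. The paper's integrity proof compares the verification message only with the single message $\sigma$ was computed over, and then invokes EUF-CMA. That appeal tacitly requires that $h' \,\|\, \mathit{proof\_hash}$ was never signed for \emph{any} artifact. Your observation that matching a legitimately signed message forces a SHA-256 collision between $E'_w$ and some certified $E''_w$ is what makes the appeal sound. It also handles an adversary who re-hashes a fresh proof, and it yields an explicit collision-plus-forgery bound.

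One caveat, which also points to a shortcut. Your claim that no certified $E''_w$ contains the disallowed import presumes a single whitelist version. After a removal (Proposition~\ref{prop:monotonic-purity}), an earlier-certified artifact may legitimately contain that import, and then no collision arises. Steps~4 and~5 of Definition~\ref{def:gate} cover this case: the gate re-parses the import section and re-checks it against the \emph{current} whitelist. It therefore rejects any $E'_w$ with a disallowed import, whatever certificate accompanies it. Your cryptographic reduction is needed only to justify the corollary's phrasing ``without forging an Ed25519 signature,'' not for rejection at the gate.
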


\subsection{Runtime Verification Gate}
\label{sec:gate}

The runtime verification gate is the enforcement point where purity certificates are checked before an executor enters the governance pipeline.

\begin{definition}[Verification Gate]
\label{def:gate}
The verification gate is a function:
\[
\mathit{gate} : E_w \times C \times \pi \to \mathit{Accept}(E_w) \mid \mathit{Reject}(\mathit{Reason})
\]
that performs the following checks in sequence (short-circuiting on failure):
\begin{enumerate}[nosep]
  \item \textbf{Signature verification:} Verify $\mathrm{Ed25519Verify}(k_{\mathrm{pub}},\; \mathrm{SHA256}(E_w) \,\|\, \mathrm{SHA256}(\pi),\; \sigma)$.
  \item \textbf{Artifact binding:} Verify $\mathrm{SHA256}(E_w) = C.\mathit{artifact\_hash}$.
  \item \textbf{Proof binding:} Verify $\mathrm{SHA256}(\pi) = C.\mathit{proof\_hash}$.
  \item \textbf{Proof validation:} Parse $E_w$'s import section independently and verify it matches $\pi.\mathit{imports}$.
  \item \textbf{Classification validation:} For each import in $\pi.\mathit{imports}$, verify the classification against the current whitelist $\mathcal{W}$.
  \item \textbf{Conclusion validation:} Verify $\pi.\mathit{conclusion} = \mathit{pure}$.
\end{enumerate}
\end{definition}

\paragraph{Defense in depth.} Steps 1--3 verify the certificate's cryptographic integrity. Steps 4--6 re-derive the purity proof from the artifact itself, independently of the certificate. An adversary would need to both forge a signature \emph{and} somehow alter the WASM module's import section without changing its hash, both of which are cryptographically infeasible. Step 4 is the critical defense-in-depth measure: even if the certificate were somehow forged, the gate independently verifies that the module's actual imports match the claimed proof.

\begin{theorem}[Gate Completeness]
\label{thm:gate-completeness}
Every executor that enters the governance pipeline has a valid purity certificate.
\end{theorem}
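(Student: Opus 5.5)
The argument rests on two facts. First, the verification gate of Definition~\ref{def:gate} is the only path from loading an executor to invoking it. Second, the gate's checks are jointly sufficient for validity. I would first fix what ``valid purity certificate'' means: $C$ is valid for $E_w$ under proof $\pi$ exactly when all six conditions of Definition~\ref{def:gate} hold. These are the signature verifying over $\mathrm{SHA256}(E_w)\,\|\,\mathrm{SHA256}(\pi)$, both hash bindings, $\pi.\mathit{imports}$ matching the parsed import section, every import classifying into $\mathcal{W}$, and $\pi.\mathit{conclusion}=\mathit{pure}$. With this definition the theorem separates into a \emph{soundness of acceptance} lemma and a \emph{mediation} lemma.

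For soundness, I would argue by inspection of the gate's sequential, short-circuiting structure. $\mathit{Accept}(E_w)$ is returned only if none of steps 1--6 failed, so every validity condition holds at the moment of acceptance. For immutability, I would invoke Theorem~\ref{thm:cert-integrity}. After acceptance, the artifact actually handed to the interpreter must be the same $E_w$ that was hashed. If a different $E'_w$ were substituted, the same certificate would fail the binding check with overwhelming probability. I would also note that step~5 uses the \emph{current} whitelist, so validity is relative to $\mathcal{W}$ at load time. Any whitelist-version mismatch in the metadata simply causes rejection.

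For mediation, I would define the set of code paths by which the runtime can reach the governance pipeline with an executor's directive list. I would then argue that every such path passes through $\mathit{gate}$. The paper's own resources support this: item~4 of the Trusted Computing Base (Definition~\ref{def:tcb}) stipulates that the gate is the sole path to executor invocation. Directives reach the interpreter (Definition~\ref{def:interpreter}) only as the return value of an invoked executor, since in the Executor IR they are produced only by $\mathcal{W}_{\mathrm{dir}}$ host calls during an invocation. Hence any executor whose directives enter the pipeline was invoked, was therefore accepted by the gate, and by the soundness lemma holds a valid certificate. Combining the two lemmas gives the theorem.

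The main obstacle is the mediation step. It cannot really be proved from within the formal model, because it is a claim about the runtime's control flow: no alternative loader or direct interpreter entry accepts raw directive lists. I would handle it honestly by stating the theorem relative to the TCB, in the same way as the Formal scoping paragraph after Definition~\ref{def:tcb}. I would then supplement this with a structural argument from the implementation: the only function that instantiates WASM executors is the one that calls $\mathit{gate}$ first, and the interpreter's entry is reachable only from that function's accept branch. I would state the cryptographic part, via Theorem~\ref{thm:cert-integrity}, as holding with overwhelming probability.
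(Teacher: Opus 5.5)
Your decomposition matches the paper's. You argue that acceptance implies validity because the gate short-circuits, and that mediation is complete; the paper's version of the latter is the ``sequential precondition'' that no code path from WASM instantiation to $\mathit{plan}$ invocation avoids $\mathit{gate}$. The problem is that you state the mediation lemma over \emph{all} executors, and in that form it is false for the architecture the paper describes. You assert that the gate is ``the only path from loading an executor to invoking it'' and that the interpreter's entry ``is reachable only from that function's accept branch.'' However, the dispatch code in Section~\ref{sec:impl-gate} has a \texttt{beam} branch that calls \texttt{executor.plan} and then \texttt{interpret\_directives} with no gate. Sections~\ref{sec:impl-wasmex} and~\ref{sec:trust-tiers} keep this path on purpose for Tier~2 and Tier~3 executors; Tier~3 executors are recorded as \texttt{:beam\_unchecked}. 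Such executors enter the governance pipeline with no purity certificate, so in a mixed deployment the unrestricted claim cannot be proved. Citing item~4 of Definition~\ref{def:tcb} for it reads that item against the paper's own dispatch logic. Likewise, your observation that directives arise only from $\mathcal{W}_{\mathrm{dir}}$ calls holds only for Executor IR modules.

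The fix is the scoping the paper's proof actually uses. It concludes only that every \emph{WASM-based} executor whose $\mathit{plan}$ is invoked has passed the gate. Corollary~\ref{cor:prior-theorems} then uses the result only for executions in which all executors are Tier~1, which per-machine minimum tiers can enforce. Restrict your mediation lemma to that class and both lemmas go through, and your TCB-relative caveat matches the paper's own.

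A smaller point: Theorem~\ref{thm:cert-integrity} does not do the post-acceptance work you assign it, because the binding check runs only when an artifact is presented to the gate. What ties the invoked code to the checked bytes is that \texttt{verify\_and\_load} instantiates exactly the verified binary and returns that instance (Algorithm~\ref{alg:gate}). Certificate integrity matters only when a substituted artifact is presented again, for example on a miss in the cache keyed by $\mathit{artifact\_hash}$.
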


\begin{proof}
The directive interpreter is modified to invoke $\mathit{gate}(E_w, C, \pi)$ before executing any WASM-based executor's $\mathit{plan}$ function. If $\mathit{gate}$ returns $\mathit{Reject}$, the interpreter does not invoke $\mathit{plan}$ and returns an error to the caller. If $\mathit{gate}$ returns $\mathit{Accept}$, the interpreter proceeds with invocation.

This is a sequential precondition: $\mathit{plan}$ execution is gated behind $\mathit{gate}$. There is no code path from WASM executor instantiation to $\mathit{plan}$ invocation that does not pass through $\mathit{gate}$. Therefore every WASM-based executor that enters the governance pipeline (by having its $\mathit{plan}$ function invoked) has passed the verification gate.
\end{proof}

The pseudocode for the runtime gate is given as Algorithm~\ref{alg:gate}.

\begin{figure}[ht]
\begin{lstlisting}[style=pseudo,caption={Runtime Verification Gate},label=alg:gate]
function verify_and_load(wasm_binary, certificate, proof)
  // Step 1: Signature verification
  let message = SHA256(wasm_binary) || SHA256(proof)
  if not Ed25519Verify(certifier_pubkey, message, certificate.signature) then
    reject("invalid signature")
  end

  // Step 2: Artifact binding
  if SHA256(wasm_binary) != certificate.artifact_hash then
    reject("artifact hash mismatch")
  end

  // Step 3: Proof binding
  if SHA256(proof) != certificate.proof_hash then
    reject("proof hash mismatch")
  end

  // Step 4: Independent import extraction
  let actual_imports = parse_wasm_imports(wasm_binary)
  if actual_imports != proof.imports then
    reject("import mismatch between proof and binary")
  end

  // Step 5: Classification validation
  for each import in actual_imports do
    if import not in WHITELIST then
      reject("disallowed import: " || import)
    end
  end

  // Step 6: Conclusion validation
  if proof.conclusion != "pure" then
    reject("proof conclusion is not pure")
  end

  // All checks passed: instantiate WASM module
  let instance = wasm_instantiate(wasm_binary, host_functions)
  return accept(instance)
end
\end{lstlisting}
\end{figure}

\subsection{Provenance Integration}
\label{sec:provenance-integration}

The prior paper~\cite{mccann2026structural} defines the execution hash chain for provenance:
\begin{align*}
\mathit{execution\_hash}(s_i) = \mathrm{SHA256}(&\;\mathit{directive\_hash}(s_i) \,\|\, \mathit{governance\_hash}(s_i) \\
  &\;\|\, \mathit{result\_hash}(s_i) \,\|\, \mathit{execution\_hash}(s_{i-1}))
\end{align*}

We extend this to include the purity certificate:
\begin{align*}
\mathit{execution\_hash}_{\mathrm{vp}}(s_i) = \mathrm{SHA256}(&\;\mathit{directive\_hash}(s_i) \,\|\, \mathit{governance\_hash}(s_i) \\
  &\;\|\, \mathit{result\_hash}(s_i) \,\|\, \mathit{purity\_cert\_hash}(s_i) \\
  &\;\|\, \mathit{execution\_hash}_{\mathrm{vp}}(s_{i-1}))
\end{align*}

where $\mathit{purity\_cert\_hash}(s_i) = \mathrm{SHA256}(C_i)$ for the certificate of the executor that produced step $s_i$'s directives.

\paragraph{Auditor capabilities.} An auditor examining a provenance record can now verify not only that every effect was governed (governance hash) and that outputs are consistent (result hash), but also that the executor that produced the directives was provably pure at the time of execution (purity certificate hash). The auditor can retrieve the certificate, verify its signature, and independently validate the purity proof, all without access to the executor's source code.

The extended run provenance record becomes:
\[
\mathit{run\_hash}_{\mathrm{vp}} = \mathrm{SHA256}(\mathit{machine\_version\_hash} \,\|\, \mathit{input\_hash} \,\|\, \mathit{execution\_hash}_{\mathrm{vp}}(s_n) \,\|\, \mathit{output\_hash})
\]

\section{Formal Properties}
\label{sec:formal}

We now establish the formal properties of the certified purity architecture. The key result is that the prior paper's safety theorems hold with a strictly stronger foundation.

\subsection{Structural Purity (Restated)}

\begin{theorem}[Structural Purity (Restated)]
\label{thm:structural-purity-restated}
Under the certified purity architecture, executors compiled to the Executor IR satisfy the pure module constraint (Definition~\ref{def:pure-constraint}) by construction, not by static analysis.
\end{theorem}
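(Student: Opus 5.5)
The restated theorem has two parts. The first is the purity conclusion itself, which is exactly Theorem~\ref{thm:structural-purity}. The second is the qualifier that the conclusion holds ``by construction, not by static analysis.'' I will take the first part as already established and aim the argument at the second: showing that no step of the reasoning inspects the executor's code beyond enumerating its declared imports.

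Start with an arbitrary Executor IR module $E_w$. I will make explicit which facts the proof of Theorem~\ref{thm:structural-purity} actually uses:
\begin{enumerate}[nosep]
  \item[(a)] the WASM capability model: a module can call only its declared imports, and \texttt{call\_indirect} tables contain only imports or module-internal functions;
  \item[(b)] the definitional restriction of Definition~\ref{def:executor-ir}: every import lies in $\mathcal{W}$;
  \item[(c)] Proposition~\ref{prop:whitelist-purity}: every function in $\mathcal{W}$ is pure.
\end{enumerate}
None of these quantifies over the instructions of $E_w$ or over its possible runtime behaviours. Fact (a) is a property of the format and its validator, which is part of the TCB (Definition~\ref{def:tcb}). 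Fact (b) is a syntactic property of the import section. Fact (c) concerns the host layer, not the executor. From these I would redo the induction on call depth: any function reachable from $E_w$ is either in $\mathcal{W}$ or is internal code whose callees are all reachable functions. Purity is closed under composition, so every definable function is pure, and Definition~\ref{def:pure-constraint} holds.

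Next I would tie the premise (b) to the runtime machinery, so that ``compiled to the Executor IR'' is something the system enforces rather than merely assumes. By Theorem~\ref{thm:gate-completeness}, every executor whose $\mathit{plan}$ is invoked has passed the gate. Steps~4--5 of Definition~\ref{def:gate} independently reparse the import section and check each entry for membership in the finite set $\mathcal{W}$. This check is a decidable, total syntactic test on a finite list, so it is not a semantic property and Rice's theorem does not apply. I would also rule out the five bypass classes of Table~\ref{tab:bypasses} by observing that each corresponds to a capability absent from $\mathcal{W}$ and absent from the WASM instruction set. This is the contrast with Observation~2 in Section~\ref{sec:threat}, and it justifies ``not by static analysis.''

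The main obstacle is making ``by construction, not by static analysis'' precise enough to count as a claim rather than rhetoric. My first attempt would be to define static analysis as any procedure that decides a semantic property of the executor's code. I would then note that the guarantee is a universal statement over all modules satisfying a syntactic import predicate, with effect-freedom coming from the absence of capabilities. The remaining subtlety is the \texttt{call\_indirect} and table case. In particular, host-populated tables must contain only $\mathcal{W}$ functions. I would discharge this by adding it as a runtime obligation under TCB item~1 rather than proving it from the executor's code.
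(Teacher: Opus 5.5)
Your argument is essentially the paper's: its proof of the restated theorem just reinvokes Theorem~\ref{thm:structural-purity} and observes that every $f \in \mathit{Callable}(E_w)$ is either a $\mathcal{W}$-function (pure by Proposition~\ref{prop:whitelist-purity}) or internal code over linear memory, so the constraint holds independently of the executor's code content. Your ``in $\mathcal{W}$ or internal'' split is in fact the accurate form of the paper's literal formula $\forall f \in \mathit{Callable}(E_w) : f \in \mathcal{W}$. Your additional steps (tying premise (b) to gate steps~4--5, the syntactic-versus-semantic contrast with Rice's theorem, and treating host-populated tables as an explicit TCB obligation) are sound refinements but are not needed here, since ``compiled to the Executor IR'' already gives (b) by Definition~\ref{def:executor-ir}; the paper places the gate linkage in Theorem~\ref{thm:gate-completeness-restated} instead.
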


\begin{proof}
This is Theorem~\ref{thm:structural-purity} restated for emphasis. The proof establishes that the pure module constraint is a structural consequence of the WASM capability model and the host function whitelist, independent of the executor's code content.

Formally, let $\mathcal{E}_w$ be the set of all syntactically valid Executor IR modules. For all $E_w \in \mathcal{E}_w$:
\[
\forall f \in \mathit{Callable}(E_w) : f \in \mathcal{W}
\]
where $\mathit{Callable}(E_w)$ is the set of all functions invocable by $E_w$ during execution (imports, indirect calls via table, internal functions). Since $\mathcal{W} \subset \mathbf{Pure}$ (Proposition~\ref{prop:whitelist-purity}), and internal WASM functions are pure computations over linear memory:
\[
\forall f \in \mathit{Callable}(E_w) : f \in \mathbf{Pure}
\]
which is exactly the pure module constraint.
\end{proof}

\subsection{Certificate Integrity (Restated)}

\begin{theorem}[Certificate Integrity (Restated)]
\label{thm:cert-integrity-restated}
A valid purity certificate for artifact $E_w$ cannot be transferred to a modified artifact $E'_w \neq E_w$, assuming the collision resistance of SHA-256 and the existential unforgeability of Ed25519.
\end{theorem}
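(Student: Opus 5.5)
The proof will be a reduction: any way of making the gate accept a modified artifact $E'_w \neq E_w$ under the certificate $C = (\mathit{artifact\_hash}, \mathit{proof\_hash}, \sigma, \mathit{metadata})$ issued for $E_w$ will be turned into either a SHA-256 collision or an Ed25519 forgery. Concretely, suppose an adversary outputs $E'_w \neq E_w$ and some proof record $\pi'$ such that $\mathit{gate}(E'_w, C, \pi')$ passes steps 1--3 of Definition~\ref{def:gate}. That is, it passes the signature check on $\mathrm{SHA256}(E'_w)\,\|\,\mathrm{SHA256}(\pi')$ and both hash bindings. I will show this event has negligible probability by splitting on whether $\mathrm{SHA256}(E'_w) = \mathit{artifact\_hash}$.

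In the first case, $\mathrm{SHA256}(E'_w) = \mathit{artifact\_hash} = \mathrm{SHA256}(E_w)$ with $E'_w \neq E_w$. The pair $(E_w, E'_w)$ is then a SHA-256 collision, which collision resistance rules out except with negligible probability. In the second case, $\mathrm{SHA256}(E'_w) \neq \mathit{artifact\_hash}$, so step 2 (artifact binding) already rejects deterministically. So the attack fails outright, with no cryptographic assumption needed on that branch.

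To make the role of Ed25519 explicit, I would add the stronger variant in which the adversary also produces its own $\sigma'$ in place of $\sigma$. Passing step 1 then requires a valid signature on the message $m' = \mathrm{SHA256}(E'_w)\,\|\,\mathrm{SHA256}(\pi')$. The honest certifier signed $m = \mathit{artifact\_hash}\,\|\,\mathit{proof\_hash}$, and hash outputs are fixed length, so the concatenation is injective. Hence $m' \neq m$ whenever the artifact hashes differ, and $m'$ is a message the certifier never signed for this artifact. I will treat the certifier's past signatures as the chosen-message oracle in the EUF-CMA game. A verifying $(m', \sigma')$ is then an existential forgery, so its probability is bounded by the Ed25519 advantage. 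Summing the two bounds gives $\Pr[\text{transfer}] \le \mathrm{Adv}^{\mathrm{coll}}_{\mathrm{SHA256}} + \mathrm{Adv}^{\mathrm{euf\text{-}cma}}_{\mathrm{Ed25519}}$, which is negligible. This matches Theorem~\ref{thm:cert-integrity} and Corollary~\ref{cor:no-transfer}.

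The main obstacle is stating precisely what ``transferred'' means and tying it to the gate's acceptance condition. In particular, I need to handle the oracle caveat carefully. If the certifier might also sign the modified artifact itself, then $m'$ could be a legitimately signed message, and no forgery results. I would resolve this by defining transfer as acceptance of $(E'_w, C)$ with the \emph{original} certificate components, which the hash-binding case analysis settles without any signature assumption. The signature argument would then be confined to the substitution variant, restricted to messages the certifier never signed. I would note that signing an impure $E'_w$ is excluded because the honest certifier only signs proofs with conclusion $\mathit{pure}$ that re-derive correctly.
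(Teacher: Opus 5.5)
Your proof is correct, but a different gate check carries the main claim than in the paper. The paper's argument, given for Theorem~\ref{thm:cert-integrity} and simply cited by the restatement, uses collision resistance only to get $h' = \mathrm{SHA256}(E'_w) \neq \mathit{artifact\_hash}$. It then rejects at the \emph{signature} check: the original $\sigma$ would have to verify on the new message $h' \,\|\, \mathit{proof\_hash}$, and EUF-CMA is invoked to rule this out.

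You instead observe that once $h' \neq \mathit{artifact\_hash}$, step~2 (artifact binding) rejects deterministically. So for the certificate as issued, only collision resistance is needed. Ed25519 enters only when the adversary mints its own certificate, which is really the content of Corollary~\ref{cor:no-transfer}. In your ``stronger variant,'' replacing $\sigma$ alone still fails at step~2. The adversary must also rewrite $\mathit{artifact\_hash}$ and $\mathit{proof\_hash}$ before your signature argument is the one doing the work; it is worth stating this explicitly.

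Your route buys precision. EUF-CMA only forbids valid signatures on messages never submitted to the signer, so the paper's step tacitly assumes $h' \,\|\, \mathit{proof\_hash}$ was never signed. If $E'_w$ were itself an honestly certified executor with the same import list, black-box EUF-CMA would not guarantee that the old $\sigma$ fails on that message. Strong unforgeability, or Ed25519's concrete structure, would be needed. Your caveat identifies exactly this, and your hash-binding case split avoids it while yielding an explicit advantage bound.

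The paper's route, in exchange, argues (modulo that caveat) that the signature alone binds the artifact. That is its defense-in-depth point: even a verifier performing only step~1 would reject.
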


\begin{proof}
See Theorem~\ref{thm:cert-integrity}. The security reduces to standard cryptographic assumptions.
\end{proof}

\subsection{Gate Completeness (Restated)}

\begin{theorem}[Gate Completeness (Restated)]
\label{thm:gate-completeness-restated}
Every executor that enters the governance pipeline has a valid purity certificate, and therefore satisfies the pure module constraint.
\end{theorem}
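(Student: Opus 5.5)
The statement has two conjuncts, and I would prove them in sequence by chaining results already in place. The first conjunct, that every executor entering the governance pipeline has a valid purity certificate, is exactly Theorem~\ref{thm:gate-completeness}. I would restate its argument briefly. The interpreter invokes $\mathit{gate}(E_w, C, \pi)$ as a sequential precondition to $\mathit{plan}$. $\mathit{Reject}$ blocks invocation. By item~4 of the TCB (Definition~\ref{def:tcb}), the gate is the sole path to executor invocation, so no other code path reaches $\mathit{plan}$. Hence entering the pipeline implies $\mathit{gate}$ returned $\mathit{Accept}$. I would also make explicit what ``valid'' means: all six checks of Definition~\ref{def:gate} succeeded.

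The second conjunct, ``and therefore satisfies the pure module constraint,'' needs a bridging lemma: if $\mathit{gate}(E_w, C, \pi) = \mathit{Accept}(E_w)$, then $E_w$ is an Executor IR module in the sense of Definition~\ref{def:executor-ir}. I would obtain this from steps~4 and~5 alone. Step~4 establishes $\mathit{parse\_wasm\_imports}(E_w) = \pi.\mathit{imports}$, where the parse is computed from the binary itself and not taken from the certificate. Step~5 checks every element of that list against the current $\mathcal{W}$. Together, every import actually declared in $E_w$ lies in $\mathcal{W}$, which is precisely the defining condition of the Executor IR. One consequence worth stressing is that this direction does not depend on the cryptographic steps~1--3 or on Theorem~\ref{thm:cert-integrity}. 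Even a forged certificate cannot let an impure module through, because purity is re-derived from the artifact. The cryptographic checks matter only for attributing the certificate to a certifier and for provenance.

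With the bridging lemma in hand, I would apply Theorem~\ref{thm:structural-purity} (equivalently Theorem~\ref{thm:structural-purity-restated}) to $E_w$, which yields the pure module constraint. That theorem in turn rests on Proposition~\ref{prop:whitelist-purity} and on the WASM capability model, so I would state the final result relative to TCB items~1--4.

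The main obstacle is the bridging step, specifically whether the parser used in step~4 faithfully reflects the imports that the WASM runtime will actually honor at instantiation. If $\mathit{parse\_wasm\_imports}$ and the runtime's own decoder disagree, for example on malformed or duplicate import sections, the lemma fails. I would handle this by noting that the module is instantiated by the same runtime with only $\mathcal{W}$'s host functions supplied, so any undeclared or non-whitelisted import makes instantiation fail. This gives a second, runtime-level closure of the gap, and I would record parser agreement explicitly as part of the TCB assumption.
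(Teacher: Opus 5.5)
Your argument is correct, and the first conjunct goes exactly as in the paper, by appeal to Theorem~\ref{thm:gate-completeness}.

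The difference is the bridge from gate acceptance to purity. The paper goes through the certificate. Passing the gate requires a valid certificate, and by Definition~\ref{def:purity-cert} and Theorem~\ref{thm:cert-integrity} that certificate ``binds to an artifact that satisfies the pure module constraint,'' after which Theorem~\ref{thm:structural-purity} applies. You skip the certificate and read Executor IR membership directly off gate steps~4 and~5: the independently parsed imports equal $\pi.\mathit{imports}$, and each of them lies in the current $\mathcal{W}$.

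Your bridge is the more rigorous of the two. Theorem~\ref{thm:cert-integrity} only gives non-transferability, so the paper's inference tacitly relies on one of two things. Either the certifier honestly ran the Stage~2 import check before signing, or steps~4--5 are doing the work after all. Your route makes purity independent of SHA-256, Ed25519 and certifier-key secrecy; under it, a compromised certifier key alone cannot admit a module with disallowed imports. The price is the parser-faithfulness assumption, which you correctly fold into the TCB.

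What the paper's cryptographic route buys is attribution. The accepted artifact is tied to a specific certifier's signed claim, which is what provenance hashing and remote attestation consume.

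One caveat on scope. Your step ``no other code path reaches $\mathit{plan}$'' holds only for the \texttt{wasm} branch of the integrated dispatch, because the \texttt{beam} branch calls $\mathit{plan}$ without any gate. The paper's proof quietly restricts to WASM-based executors, and yours should too. So ``every executor'' should be read as every Tier~1 executor, as Corollary~\ref{cor:prior-theorems} makes explicit.
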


\begin{proof}
By Theorem~\ref{thm:gate-completeness}, every WASM-based executor passes the verification gate before invocation. By Definition~\ref{def:gate}, passing the gate requires a valid certificate. By Definition~\ref{def:purity-cert} and Theorem~\ref{thm:cert-integrity}, a valid certificate binds to an artifact that satisfies the pure module constraint. By Theorem~\ref{thm:structural-purity}, the artifact satisfies the constraint by construction.
\end{proof}

\subsection{Discharge of the Pure Module Constraint}

For self-containment, we restate the safety theorems from the prior paper~\cite{mccann2026structural} that depend on the pure module constraint:

\begin{enumerate}[nosep]
  \item \textbf{Governance Completeness:} $\Box\; \forall d.\; (\mathit{effect\_performed}(d) \to \mathit{governance\_checked}(d))$. Every effect that occurs in the system has passed through the governance pipeline.
  \item \textbf{Provenance Completeness:} $\Box\; \forall d.\; (\mathit{effect\_performed}(d) \to \mathit{provenance\_recorded}(d))$. Every effect that occurs has a corresponding entry in the provenance chain.
  \item \textbf{No Ungoverned Effects:} $\neg\Diamond\; \exists d.\; (\mathit{effect\_performed}(d) \wedge \neg\mathit{governance\_checked}(d))$. There is no execution path that produces effects without governance.
\end{enumerate}

\noindent All three are proved in~\cite{mccann2026structural} under the assumption that executors satisfy the pure module constraint (Definition~\ref{def:pure-constraint}). The core safety result ($\texttt{gov\_safe}$) is mechanized in Rocq with zero admitted lemmas~\cite{mccann2026mechanized}.\footnote{The Rocq development is available at \url{https://github.com/mashin-live/governance-proofs}.} The present paper discharges the assumption by construction.

\begin{corollary}[Prior Theorems Hold for Tier~1 Executors]
\label{cor:prior-theorems}
For executions where all step executors are Tier~1 (WASM with valid purity certificates), the safety theorems above hold with the pure module constraint discharged by construction rather than assumed. For mixed-tier deployments (Section~\ref{sec:trust-tiers}), the prior paper's guarantees for Tier~2/3 executors continue to rely on the original static analysis or convention-based enforcement.
\end{corollary}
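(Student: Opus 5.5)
The plan is to treat the corollary as an instantiation argument: the three safety theorems of~\cite{mccann2026structural} (and the mechanized \texttt{gov\_safe} of~\cite{mccann2026mechanized}) are implications of the form ``if every executor participating in an execution satisfies the pure module constraint (Definition~\ref{def:pure-constraint}), then Governance Completeness, Provenance Completeness and No Ungoverned Effects hold.'' It therefore suffices to show that, for any execution in which every step executor is Tier~1, the antecedent holds as a consequence of the architecture rather than as an assumption. The argument then ends with modus ponens applied to each theorem in turn.

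First, fix an arbitrary execution trace and an arbitrary step $s_i$ in it whose executor $E_w$ is Tier~1. By Theorem~\ref{thm:gate-completeness}, $E_w$'s $\mathit{plan}$ was invoked only after $\mathit{gate}(E_w, C, \pi)$ returned $\mathit{Accept}$. Steps~4--6 of Definition~\ref{def:gate} re-derive the import list from the binary itself and check each import against $\mathcal{W}$. So acceptance implies that $E_w$ conforms to Definition~\ref{def:executor-ir}, independently of the certificate; Theorem~\ref{thm:cert-integrity} is not even needed for this direction. Then Theorem~\ref{thm:structural-purity} (equivalently Theorem~\ref{thm:structural-purity-restated}) gives that $E_w$ satisfies the pure module constraint. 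Since $s_i$ was arbitrary, every executor in the trace satisfies it. This is exactly Theorem~\ref{thm:gate-completeness-restated} specialized to the all-Tier~1 hypothesis.

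Second, the discharged premise is plugged into each prior theorem. Because the prior proofs quantify over executors only through the premise ``no executor performs effects,'' the remaining reasoning carries over unchanged: the interpreter (Definition~\ref{def:interpreter}) is the sole effect site, so every $\mathit{execute}(d)$ is preceded by $\mathit{trust}$, $\mathit{permission}$, $\mathit{phase}$ and hooks, and followed by $\mathit{record}$. I would state explicitly that the conclusion is relative to the TCB of Definition~\ref{def:tcb}. In particular, component~4 (the gate is the sole path to invocation) is what Theorem~\ref{thm:gate-completeness} relies on, and components~1--3 are what Proposition~\ref{prop:whitelist-purity} and Theorem~\ref{thm:structural-purity} rely on. The phrase ``by construction'' should therefore be read as ``by construction, modulo the TCB.'' For the mixed-tier clause, I would simply observe that the Tier~1 argument applies only to Tier~1 steps. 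For Tier~2/3 steps the premise is supplied by the original enforcement, so the overall guarantee is the conjunction of the two sources of the premise, and nothing new needs to be proved.

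The main obstacle I expect is a mismatch of notions of purity. The prior theorems are stated over BEAM modules with purity defined via $\mathit{Type}(E)$ and the import closure. Here, the purity is I/O purity of a WASM instance, which permits mutation of linear memory and appending to a directive buffer. I would handle this by arguing that the prior proofs use purity only through the property ``$\mathit{plan}$ performs no effect observable outside its instance other than returning $\mathit{Result}\times[\mathit{Directive}]$.'' Proposition~\ref{prop:whitelist-purity} establishes exactly this, since the directive buffer is consumed only by the interpreter after return. If the mechanized \texttt{gov\_safe} turns out to use a stronger hypothesis, I would instead exhibit a simulation: a Tier~1 WASM executor, viewed as a black box from $\mathit{StepConfig}\times\mathit{Context}$ to its output, is a function in the sense of Definition~\ref{def:executor}.
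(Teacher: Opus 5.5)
Your proposal is correct and follows the paper's proof. Both treat the prior safety proofs as conditional on Premise~1, replace that premise for all-Tier~1 executions with Theorem~\ref{thm:gate-completeness-restated} (gate acceptance plus Theorem~\ref{thm:structural-purity}), and leave the interpreter-based remainder unchanged. Your one variation is sound and slightly tighter: you get Executor-IR conformance directly from steps~4--6 of Definition~\ref{def:gate}, whereas the paper's restated Gate Completeness goes through Theorem~\ref{thm:cert-integrity} and so implicitly relies on the certifier's key. Your TCB and purity-notion caveats are sensible additions that the paper leaves implicit.
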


\begin{proof}
The prior paper's safety proofs have the following structure:

\begin{enumerate}[nosep]
  \item \textbf{Premise:} All executors satisfy the pure module constraint (Definition~\ref{def:pure-constraint}).
  \item \textbf{Lemma:} Under the constraint, executors cannot perform effects (Proposition~2 in prior paper).
  \item \textbf{Lemma:} The interpreter is the sole effect site, and its structure ensures governance checking and provenance recording for every effect (Definition~\ref{def:interpreter}).
  \item \textbf{Conclusion:} Governance Completeness, Provenance Completeness, and No Ungoverned Effects hold for all executors.
\end{enumerate}

The certified purity architecture replaces Premise~1 with a construction: Theorem~\ref{thm:gate-completeness-restated} establishes that every executor entering the pipeline satisfies the constraint. The remainder of the proof is unchanged. The theorem statements are identical; only the justification for the premise is strengthened from ``enforced by static analysis'' to ``enforced by construction through capability restriction and cryptographic attestation.''
\end{proof}

\paragraph{What changed and what did not.} The safety theorems are the same three properties stated in the same temporal logic. The formal model is unchanged. The definitions of executor, directive, governance pipeline, and interpreter are identical. What changed is the \emph{confidence} in Premise~1: it was previously an assumption about the implementation (``module import graph analysis prevents bypass''), and is now a theorem about the architecture (``the WASM capability model and cryptographic attestation make bypass structurally impossible'').

\subsection{Bypass Elimination}

\begin{theorem}[Bypass Elimination]
\label{thm:bypass-elimination}
For each of the five BEAM bypass classes (Table~\ref{tab:bypasses}), the restricted compilation target makes the bypass structurally impossible.
\end{theorem}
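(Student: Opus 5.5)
The plan is a case analysis over the five bypass classes of Table~\ref{tab:bypasses}, all resting on one common lemma that is essentially the $\mathit{Callable}$ statement in the proof of Theorem~\ref{thm:structural-purity-restated}. For every Executor IR module $E_w$, the only operations that can transfer control outside $E_w$'s own code are calls to imports, each of which lies in $\mathcal{W}$ by Definition~\ref{def:executor-ir}. Everything else $E_w$ executes is a WASM core instruction whose semantics is confined to the instance's stack, locals, globals, tables and linear memory. I would establish this first, using the WASM capability model of Section~\ref{sec:bg-wasm}, items~1--5. For \texttt{call\_indirect} I would reuse the table-population argument and the induction on call depth from the proof of Theorem~\ref{thm:structural-purity}. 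Relative to the TCB (Definition~\ref{def:tcb}, items~1--3), the lemma says that the set of host capabilities reachable by $E_w$ is exactly the set of its declared imports.

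Each bypass class is then reduced to the claim that exercising it requires some host capability $h$ that is not in $\mathcal{W}$. Dynamic dispatch (Class~1) needs a name-to-function resolver such as \texttt{:erlang.apply/3}. WASM has no instruction that resolves a runtime value to a host function, and \texttt{call\_indirect} indexes only a table whose entries are, by the lemma, internal functions or members of $\mathcal{W}$. So the computed-atom trick of Section~\ref{sec:threat} has no target to call. Code evaluation (Class~2) needs an evaluator import. There is no such import, and WASM code sections are immutable after validation: linear memory is not executable, so bytes written into it can never be run as code. NIFs (Class~3) and dynamic module loading (Class~5) both need a loader capability, either dlopen-style native loading or instantiation of a new module. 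Neither is a core instruction, and neither is in $\mathcal{W}$ when I inspect the whitelist of Definition~\ref{def:whitelist} entry by entry. Ports (Class~4) need process spawning or message passing to an OS process. WASM has no spawn instruction, and the WASI-style imports that would supply one are excluded by the WASI exclusion paragraph and rejected at gate step~5 (Definition~\ref{def:gate}).

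To close the argument, I would note that Proposition~\ref{prop:whitelist-purity} rules out any $h \in \mathcal{W}$ secretly providing one of these capabilities, since such an $h$ would be effectful. A program in the restricted target that performs any of the five bypasses would therefore either call a non-whitelisted import, which contradicts Definition~\ref{def:executor-ir} and is rejected by the gate (Theorem~\ref{thm:gate-completeness}), or exercise a capability through a whitelisted function, which contradicts Proposition~\ref{prop:whitelist-purity}.

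I expect the main obstacle to be making ``structurally impossible'' precise for Classes~1 and~2 without hand-waving. I would need to argue from the WASM validation and execution semantics, which is a spec-level claim about the instruction set, and not merely from the absence of a named instruction. Two points need care. For dynamic dispatch, I would have to rule out runtime table growth through \texttt{table.set} or \texttt{table.grow} adding foreign references. Here I would argue that any \texttt{funcref} value obtainable by the module originates from its own functions or its imports (\texttt{ref.func} only names module-indexed functions), so the table invariant is preserved by induction over execution steps. The same inductive invariant handles exported or imported tables, provided the host does not write foreign functions into them. I would state that proviso explicitly as part of TCB item~1.
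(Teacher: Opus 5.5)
Your proposal is correct and follows essentially the same route as the paper: a per-class case analysis showing that each bypass would need either an instruction WASM does not have or a host capability outside the whitelisted import set $\mathcal{W}$. Two parts of your version are tighter than the paper's. You factor everything through one reachability lemma, and you explicitly treat \texttt{table.set}/\texttt{table.grow}, \texttt{ref.func}, and host-written tables; the paper's Class~1 case only asserts that the host fills the table with members of $\mathcal{W}$, leaving the \texttt{elem}-segment entries for internal functions to the earlier structural-purity proof.
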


\begin{proof}
We prove each case individually.

\paragraph{Bypass Class 1: Dynamic dispatch (\texttt{:erlang.apply/3}).}
The WASM instruction set contains no dynamic dispatch mechanism that takes a module name and function name as runtime values and invokes the corresponding function. WASM's \texttt{call} instruction references a function by a static index into the module's function table. WASM's \texttt{call\_indirect} instruction invokes a function through a runtime-determined table index, but the table is populated by the host at instantiation time with functions from $\mathcal{W}$ only. There is no instruction that accepts a string or atom identifying a function and resolves it to a callable target. Therefore dynamic dispatch to arbitrary functions is structurally impossible. \qed

\paragraph{Bypass Class 2: Code evaluation (\texttt{Code.eval\_string/1}).}
WASM modules execute pre-compiled bytecode. There is no instruction for evaluating source code at runtime. The module's code section is fixed at compilation time and validated at instantiation time. There is no mechanism for generating or loading new code during execution. Therefore runtime code evaluation is structurally impossible. \qed

\paragraph{Bypass Class 3: NIFs (native code loading).}
WASM modules execute within a sandboxed virtual machine. They have no access to the host's file system, shared library loading facilities, or native code execution mechanisms. The WASM specification provides no instruction for loading shared libraries or invoking native functions outside the declared import set. Therefore NIF-equivalent bypass is structurally impossible. \qed

\paragraph{Bypass Class 4: Ports (external process communication).}
WASM modules cannot spawn operating system processes. There is no instruction for process creation, pipe establishment, or inter-process communication outside the declared import set. The host function whitelist $\mathcal{W}$ contains no process-spawning function. Therefore port-equivalent bypass is structurally impossible. \qed

\paragraph{Bypass Class 5: Dynamic module loading (\texttt{:code.load\_binary/3}).}
WASM modules cannot load other WASM modules at runtime. Module instantiation is performed by the host, not by the module itself. There is no instruction for loading bytecode, instantiating modules, or linking additional imports during execution. Therefore dynamic module loading bypass is structurally impossible. \qed
\end{proof}

\subsection{Strength Comparison}

We summarize the guarantee strength at each enforcement level (Table~\ref{tab:strength}).

\begin{table}[ht]
\centering
\caption{Guarantee strength by enforcement mechanism}
\label{tab:strength}
\small
\begin{tabular}{@{}lcccc@{}}
\toprule
\textbf{Property} & \textbf{Convention} & \textbf{Static Analysis} & \textbf{Verified Purity} \\
\midrule
Detection of known patterns & \textsf{no} & \textsf{yes} & \textsf{yes} \\
Detection of unknown patterns & \textsf{no} & \textsf{no} & \textsf{n/a (prevented)} \\
Prevention of adversarial bypass & \textsf{no} & \textsf{no} & \textsf{yes} \\
Cryptographic attestation & \textsf{no} & \textsf{no} & \textsf{yes} \\
Auditor-verifiable & \textsf{no} & \textsf{partial} & \textsf{yes} \\
Holds for all future executors & \textsf{no} & \textsf{partially} & \textsf{yes} \\
\bottomrule
\end{tabular}
\end{table}

\subsection{Whitelist Evolution}
\label{sec:whitelist-evolution}

We establish that whitelist evolution preserves purity guarantees monotonically, ensuring that certificates remain valid as the whitelist grows and are correctly invalidated when it shrinks.

\begin{proposition}[Monotonic Purity Preservation]
\label{prop:monotonic-purity}
Let $\mathcal{W}_{v_1}$ and $\mathcal{W}_{v_2}$ be whitelist versions where $v_2 > v_1$. If $\mathcal{W}_{v_1} \subseteq \mathcal{W}_{v_2}$ (every function in $v_1$ is also present and pure in $v_2$), then any executor with a valid purity certificate against $\mathcal{W}_{v_1}$ remains pure under $\mathcal{W}_{v_2}$.

Conversely, if a function $f$ is present in $\mathcal{W}_{v_1}$ but removed from $\mathcal{W}_{v_2}$ (because $f$ was determined to be impure or unnecessary), then certificates generated against $\mathcal{W}_{v_1}$ that include $f$ in their import proofs are invalidated by the verification gate's whitelist currency check.
\end{proposition}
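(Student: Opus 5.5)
The proof splits into the two directions of the proposition, and in both the main tool is the gate's classification step (Definition~\ref{def:gate}, step~5), which checks imports against the \emph{current} whitelist rather than the one in force when the certificate was issued.

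\textbf{Monotone direction.} Fix an executor $E_w$ with a valid certificate $C$ and proof $\pi$ against $\mathcal{W}_{v_1}$. The plan has four steps.
\begin{enumerate}[nosep]
  \item Validity under $v_1$ means steps 4--6 of the gate succeeded. So $\mathit{imports}(E_w) = \pi.\mathit{imports}$, and every import lies in $\mathcal{W}_{v_1}$.
  \item Because $\mathcal{W}_{v_1} \subseteq \mathcal{W}_{v_2}$, every import also lies in $\mathcal{W}_{v_2}$. Under $v_2$, steps 4--6 therefore succeed again. The classification of each import is unchanged, since the hypothesis says each function is ``present and pure in $v_2$'' and its data/directive class is preserved. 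Steps 1--3 do not mention the whitelist, so they succeed as before.
  \item Hence $E_w$ is an Executor IR module relative to $\mathcal{W}_{v_2}$ (Definition~\ref{def:executor-ir}).
  \item Apply Theorem~\ref{thm:structural-purity} with $\mathcal{W}_{v_2}$ in place of $\mathcal{W}$. Its premise, that Proposition~\ref{prop:whitelist-purity} holds for $\mathcal{W}_{v_2}$, is exactly the hypothesis that every function present in $v_2$ is pure, which is part of the TCB assumption (Definition~\ref{def:tcb}, item~3).
\end{enumerate}
One point needs care. Purity of $E_w$ only depends on the imports it actually declares, which lie in $\mathcal{W}_{v_1}$, so the pure host functions added in $v_2$ are irrelevant to $E_w$. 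I would note this explicitly to make clear that extending the whitelist cannot retroactively grant capabilities to an already-certified module.

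\textbf{Invalidation direction.} Suppose $f \in \mathcal{W}_{v_1} \setminus \mathcal{W}_{v_2}$ and $f \in \pi.\mathit{imports}$. The argument has three steps.
\begin{enumerate}[nosep]
  \item At gate time the current whitelist is $\mathcal{W}_{v_2}$. If steps 1--4 pass, the loop in step~5 (Algorithm~\ref{alg:gate}) reaches $f$.
  \item Since $f \notin \mathcal{W}_{v_2}$, the gate rejects with ``disallowed import.''
  \item If an earlier step fails, the gate also rejects. Either way the certificate is not accepted, and by Theorem~\ref{thm:gate-completeness} the executor's $\mathit{plan}$ is never invoked.
\end{enumerate}
An adversary might try to keep the certificate and strip $f$ from the proof or the binary. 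This is ruled out by proof binding (step~3), by artifact binding (step~2) together with Theorem~\ref{thm:cert-integrity}, and by step~4, which re-extracts the imports from the binary.

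\textbf{Main obstacle.} The hard part is pinning down what ``whitelist currency check'' means, since the gate definition only says step~5 uses ``the current whitelist $\mathcal{W}$.'' The $\mathit{whitelist\ version}$ field in the certificate metadata is not checked by the gate at all. I would therefore make explicit that invalidation comes from re-classifying imports against $\mathcal{W}_{v_2}$ on every load, not from any comparison of version numbers.

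A second subtlety is the monotone direction when a function keeps its name but changes its implementation between versions. I would handle this by reading the hypothesis ``present and pure in $v_2$'' as an assumption about $v_2$'s host implementations, placed in the TCB. Under that reading, the name-level argument above suffices.
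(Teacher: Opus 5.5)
Your proposal is correct and follows essentially the paper's argument: the forward direction carries each $\mathcal{W}_{v_1}$-import into $\mathcal{W}_{v_2}$ by the subset hypothesis, and the converse rests on step~5 of the gate re-classifying the imports against the \emph{current} whitelist. The paper also cites a version-range currency check as a second, redundant defense; it describes this check in its whitelist-governance discussion and in the TCB, though it is absent from Definition~\ref{def:gate}. So your remark that the version field is never checked holds only for the six-step definition, and relying on step~5 alone, as you do, suffices.
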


\begin{proof}
For the forward direction: if $\mathcal{W}_{v_1} \subseteq \mathcal{W}_{v_2}$, then every import classified as pure under $v_1$ remains classifiable as pure under $v_2$. The purity proof remains valid because no previously-pure import has become impure or been removed.

For the converse: the verification gate's classification validation (Definition~\ref{def:gate}, step 5) re-checks each import against the \emph{current} whitelist. If $f \notin \mathcal{W}_{v_2}$, classification fails for any certificate whose proof includes $f$. The executor is rejected. Additionally, the whitelist currency check rejects certificates whose whitelist version falls outside the acceptable range, providing a second defense against stale certificates.
\end{proof}

\section{Portable Governance Credentials}
\label{sec:attestation}

Certified purity is a local guarantee. Attestation makes it portable. Together, they define a cross-organizational trust substrate for governed AI execution.

The purity certificate proves to a local runtime that an executor is pure. But in distributed execution environments, where machines call other machines across organizational boundaries, a local proof is insufficient. Organization A, executing a machine that invokes Organization B's machine, needs cryptographic evidence (not trust) that B's executors are pure and that governance was applied. Remote attestation provides this evidence, converting purity certificates into \emph{portable governance credentials} that are independently verifiable by any party holding the relevant public keys.

\subsection{Attestation Record}

\begin{definition}[Attestation Record]
\label{def:attestation}
A remote attestation record $A$ for an executor $E_w$ is a structured document:
\[
A = (C,\; \pi,\; \mathit{env},\; \sigma_A)
\]
where:
\begin{itemize}[nosep]
  \item $C$ is the purity certificate (Definition~\ref{def:purity-cert}).
  \item $\pi$ is the purity proof (Definition~\ref{def:purity-proof}).
  \item $\mathit{env}$ is the execution environment descriptor: WASM runtime identity and version, host function whitelist version, and the public keys of accepted certifiers.
  \item $\sigma_A = \mathrm{Ed25519Sign}(k_{\mathrm{env}},\; \mathrm{SHA256}(C) \,\|\, \mathrm{SHA256}(\pi) \,\|\, \mathrm{SHA256}(\mathit{env}))$, a signature by the executing environment's key $k_{\mathrm{env}}$.
\end{itemize}
\end{definition}

The attestation record carries two signatures: the certifier's signature (in $C$) attests that the artifact is pure, and the environment's signature (in $\sigma_A$) attests that the environment verified the certificate and executed the artifact under the specified whitelist.

\subsection{Verification Protocol}

A remote verifier (e.g., an organization receiving execution results from a partner's system) performs the following steps:

\begin{enumerate}[nosep]
  \item \textbf{Trust establishment:} Verify that $k_{\mathrm{env}}$ (the environment's public key) is in the verifier's trust store.
  \item \textbf{Environment signature:} Verify $\mathrm{Ed25519Verify}(k_{\mathrm{env}},\; \mathrm{SHA256}(C) \,\|\, \mathrm{SHA256}(\pi) \,\|\, \mathrm{SHA256}(\mathit{env}),\; \sigma_A)$.
  \item \textbf{Certificate verification:} Verify the purity certificate $C$ as in Definition~\ref{def:gate}, steps 1--6.
  \item \textbf{Environment policy:} Verify that $\mathit{env}$ meets the verifier's requirements (e.g., minimum whitelist version, approved WASM runtime).
\end{enumerate}

If all steps pass, the verifier has cryptographic assurance that the executor was provably pure and was executed in a conforming environment, without trusting the executor author, the certifier, or the execution environment individually.

\subsection{Portable Trust Credentials}

An attestation record is a \emph{portable trust credential}: it can be transmitted alongside execution results, stored in provenance records, and verified by any party holding the relevant public keys. This enables:

\begin{itemize}[nosep]
  \item \textbf{Cross-organizational execution:} Organization A can execute a machine containing executors authored by Organization B, verify their purity certificates, and provide cryptographic proof to its own auditors that governance was complete.
  \item \textbf{Regulatory compliance:} An attestation record provides a machine-verifiable proof of purity that a regulatory auditor can check without access to source code or build infrastructure.
  \item \textbf{Supply chain integration:} Attestation records compose with the distribution provenance system~\cite{mccann2026provenance} (forthcoming), extending the four-phase lifecycle chain (definition, distribution, capability, execution) with a cryptographic purity proof at the capability-execution boundary.
\end{itemize}

\subsection{Cross-Organizational Governance Composition}

When Organization A's machine calls Organization B's machine (via the $\mathit{CallMachine}$ directive), the governance pipeline must verify purity across the organizational boundary.

The calling machine's provenance includes the called machine's attestation record. The provenance chain is:
\begin{align*}
\mathit{cross\_org\_hash} = \mathrm{SHA256}(&\;\mathit{caller\_run\_hash} \,\|\, \mathit{callee\_attestation\_hash} \\
  &\;\|\, \mathit{callee\_run\_hash})
\end{align*}

An auditor examining the caller's provenance can verify the entire cross-organizational chain: the caller's purity, the callee's purity (via attestation), and the governance of both execution paths.

\paragraph{Open question: governance completeness under composition.} The purity guarantee established in this paper is \emph{per-executor}: each executor in isolation satisfies the pure module constraint. When Organization A's machine calls Organization B's machine, both sides independently verify purity for their own executors. However, we do not prove that governance completeness \emph{composes} across organizational boundaries, that is, that the conjunction of A's governance completeness and B's governance completeness implies governance completeness for the composed execution. The gap lies in the cross-boundary directive handoff: A's interpreter produces a $\mathit{CallMachine}$ directive, but the governance of B's execution is attested after the fact, not structurally guaranteed before invocation. A formal composition theorem would require modeling the temporal relationship between A's governance check and B's execution, and is identified as future work.

\subsection{Compatibility Predicate}

Cross-organizational composition requires a formal compatibility condition. Without it, composition is conditional and ad hoc. With it, organizations define explicit, machine-evaluable trust policies.

\begin{definition}[Cross-Organizational Compatibility]
\label{def:compatibility}
Organization $A$ accepts an attestation record from Organization $B$ if and only if:
\begin{align*}
\mathrm{Compatible}(A, B) \iff \;&B.\mathit{whitelist\_hash} \in A.\mathit{accepted\_whitelists} \\
  \wedge\; &B.\mathit{runtime\_identity} \in A.\mathit{trusted\_runtimes} \\
  \wedge\; &B.\mathit{certifier} \in A.\mathit{trusted\_certifiers} \\
  \wedge\; &B.\mathit{whitelist\_version} \geq A.\mathit{minimum\_required}
\end{align*}
\end{definition}

The four conjuncts ensure: (1)~B's whitelist is recognized by A; (2)~B's runtime is in A's trust store; (3)~B's certifier is accepted by A; (4)~B's whitelist version meets A's minimum freshness requirement. Each organization independently configures its own compatibility policy. The compatibility predicate is evaluated mechanically; it is not a bilateral trust agreement but a unilateral, machine-checkable acceptance criterion.

\paragraph{Trust substrate.} The combination of purity certificates (local governance proof), attestation records (portable governance credentials), and the compatibility predicate (cross-org acceptance criterion) constitutes a \emph{trust substrate for distributed AI execution}. Organizations compose governed machines across trust boundaries without source code review, bilateral agreements, or shared infrastructure. The substrate relies only on mathematics: the WASM capability model, the collision resistance of SHA-256, and the unforgeability of Ed25519.

\section{Implementation}
\label{sec:implementation}

\subsection{BEAM + Wasmex Integration}
\label{sec:impl-wasmex}

The implementation integrates WebAssembly execution into the existing BEAM-based runtime using Wasmex~\cite{wasmex2023}, an Elixir library that provides bindings to the Wasmtime~\cite{wasmtime2023} WASM runtime.

The architecture is as follows:

\begin{enumerate}[nosep]
  \item The directive interpreter identifies the executor type (BEAM module or WASM module) from the machine definition.
  \item For WASM executors, the interpreter invokes the verification gate (Algorithm~\ref{alg:gate}).
  \item Upon acceptance, the interpreter instantiates the WASM module via Wasmex, providing host function implementations from $\mathcal{W}$.
  \item The interpreter calls the executor's exported $\mathit{plan}$ function.
  \item The executor returns directives via the directive constructor host functions.
  \item The interpreter processes the returned directives through the governance pipeline, identically to BEAM-based executors.
\end{enumerate}

\paragraph{BEAM executors are unchanged.} Existing BEAM-based executors continue to operate under static analysis enforcement. The certified purity architecture is an \emph{additional} enforcement tier, not a replacement. Organizations can adopt it incrementally: starting with WASM executors for third-party or high-assurance contexts, while retaining BEAM executors for trusted internal development.

\subsection{Host Function Interface Design}
\label{sec:impl-host}

The host function interface bridges the WASM linear memory model with the BEAM's rich data types. The design follows a serialization protocol:

\begin{enumerate}[nosep]
  \item \textbf{Input serialization:} The interpreter serializes the $\mathit{StepConfig}$ and $\mathit{Context}$ into JSON, writes the JSON bytes into the WASM module's linear memory, and passes memory offsets and lengths to the $\mathit{plan}$ function.
  \item \textbf{Context access:} Host functions in $\mathcal{W}_{\mathrm{data}}$ (e.g., $\mathit{ctx\_get}$, $\mathit{ctx\_get\_step\_output}$) read from a host-side context structure and write results into linear memory.
  \item \textbf{Directive construction:} Host functions in $\mathcal{W}_{\mathrm{dir}}$ (e.g., $\mathit{directive\_http\_request}$) read directive parameters from linear memory, construct BEAM-side directive structs, and append them to a host-side directive list.
  \item \textbf{Output:} After $\mathit{plan}$ returns, the interpreter retrieves the directive list from the host-side accumulator and the result from the WASM return value.
\end{enumerate}

\begin{figure}[ht]
\begin{lstlisting}[style=pseudo,caption={Host function interface (four functions in the \texttt{mashin} namespace)}]
// Host functions provided to WASM executors
namespace "mashin":

  function get_input_len() -> i32
    return byte_length(input_json)

  function get_input(ptr: i32)
    write_to_wasm_memory(ptr, input_json)

  function set_output(ptr: i32, len: i32)
    let output = read_from_wasm_memory(ptr, len)
    send_to_caller(output)

  function log(ptr: i32, len: i32)
    let message = read_from_wasm_memory(ptr, len)
    runtime_log("[WASM] " || message)
\end{lstlisting}
\end{figure}

\subsection{Build Pipeline}
\label{sec:impl-build}

The build pipeline transforms executor source code into a certified WASM artifact through four stages.

\begin{figure}[ht]
\begin{lstlisting}[style=pseudo,caption={Build pipeline: compile, verify, certify, sign}]
function build_certified_executor(source, language, certifier_key)
  // Stage 1: Compile to WASM
  let wasm_binary = compile_to_wasm(source, language)

  // Stage 2: Verify imports
  let imports = parse_wasm_imports(wasm_binary)
  let classifications = classify_imports(imports, WHITELIST)

  if any(classifications, is_disallowed) then
    error("Disallowed imports found: " || disallowed_imports(classifications))
  end

  // Stage 3: Construct purity proof
  let proof = {
    imports: imports,
    classifications: classifications,
    conclusion: "pure"
  }

  // Stage 4: Sign certificate
  let artifact_hash = SHA256(wasm_binary)
  let proof_hash = SHA256(proof)
  let signature = Ed25519Sign(certifier_key, artifact_hash || proof_hash)

  let certificate = {
    artifact_hash: artifact_hash,
    proof_hash: proof_hash,
    signature: signature,
    metadata: {
      certifier: public_key_of(certifier_key),
      timestamp: now(),
      whitelist_version: WHITELIST_VERSION,
      format_version: 1
    }
  }

  return {wasm_binary, certificate, proof}
end
\end{lstlisting}
\end{figure}

\paragraph{Polyglot compilation.} The Mashin architecture supports polyglot execution via governed effect machines: step executors invoke external language runtimes (Python, JavaScript, Rust) through \texttt{@mashin/actions/\{lang\}/exec} calls. For the certified purity path, these external executors compile to WASM via established toolchains (e.g., Pyodide for Python, wasm-pack for Rust-based executors, Emscripten for C-based runtimes). Elixir executors require a BEAM-to-WASM compilation path, which is a research challenge discussed in Section~\ref{sec:discussion}.

\subsection{Runtime Gate Integration}
\label{sec:impl-gate}

The runtime gate integrates with the existing directive interpreter at the executor dispatch point.

\begin{figure}[ht]
\begin{lstlisting}[style=pseudo,caption={Integrated executor dispatch with verification gate}]
function execute_step(step, context, governance_context)
  let executor = resolve_executor(step.type)

  match executor.format do
    case "beam" then
      // Existing path: static analysis enforcement
      let result = executor.plan(step.config, context)
      interpret_directives(result.directives, governance_context)

    case "wasm" then
      // Certified purity path
      let gate_result = verify_and_load(
        executor.wasm_binary,
        executor.certificate,
        executor.proof
      )
      match gate_result do
        case accept(instance) then
          let result = wasm_call(instance, "plan",
            step.config, context)
          interpret_directives(result.directives,
            governance_context)
        case reject(reason) then
          error("Executor rejected: " || reason)
      end
  end
end
\end{lstlisting}
\end{figure}

\paragraph{Cache semantics.} In practice, the verification gate caches acceptance decisions keyed by $\mathit{artifact\_hash}$. Once an executor's certificate has been verified, subsequent invocations skip the cryptographic verification (the binary's hash serves as a cache key). Cache invalidation occurs when the whitelist version changes or when the certifier's key is rotated.

\section{Evaluation}
\label{sec:evaluation}

We evaluate the certified purity architecture along seven dimensions: compilation overhead, verification latency, runtime overhead, serialization cost, end-to-end plan cycle time, certificate size, and correctness. All measurements were collected on an Apple M-series processor running macOS, with the Wasmtime JIT backend via Wasmex~0.14, Erlang/OTP~27, and Elixir~1.19. Each metric reports the median of 50 iterations following a 5-iteration warmup.

\subsection{WASM Binary Size}

The Rust-to-WASM compilation pipeline produces compact executor binaries. Table~\ref{tab:wasm-sizes} shows the compiled sizes for each executor type.

\begin{table}[ht]
\centering
\caption{Compiled WASM executor binary sizes}
\label{tab:wasm-sizes}
\small
\begin{tabular}{@{}lr@{}}
\toprule
\textbf{Executor} & \textbf{Size (KB)} \\
\midrule
\texttt{call} & 121 \\
\texttt{reason} & 133 \\
\texttt{poc} (minimal) & 22 \\
\bottomrule
\end{tabular}
\end{table}

The production executors compile to 121--133\,KB, well within the range where Wasmtime compilation and instantiation remain fast. The \texttt{reason} executor is the largest due to its multi-turn conversation threading and tool loop logic. Incremental Rust-to-WASM compilation takes approximately 0.2--0.7\,s per crate.

\subsection{Verification Latency}

Table~\ref{tab:verification} reports the time for the verification gate to parse WASM imports and check them against the purity whitelist, and for certificate operations.

\begin{table}[ht]
\centering
\caption{Certificate verification latency by executor ($n=50$, 5-iteration warmup)}
\label{tab:verification}
\small
\begin{tabular}{@{}lrrr@{}}
\toprule
\textbf{Executor} & \textbf{Median} & \textbf{Mean} & \textbf{p99} \\
\midrule
\texttt{call}   & 49\,$\mu$s & 53\,$\mu$s & 126\,$\mu$s \\
\texttt{reason} & 53\,$\mu$s & 68\,$\mu$s & 199\,$\mu$s \\
\texttt{poc} (minimal) & 11\,$\mu$s & 15\,$\mu$s & 56\,$\mu$s \\
\bottomrule
\end{tabular}
\end{table}

Certificate verification (WASM import parsing, whitelist check, Ed25519 signature verification, SHA-256 hash recomputation) takes 49--53\,$\mu$s median for the production executors and 11\,$\mu$s for the minimal \texttt{poc} executor. The p99 tail latency remains under 200\,$\mu$s even for the largest executor (\texttt{reason}, 133\,KB). With the runtime gate's hash-based caching, amortized verification cost approaches zero for frequently-invoked executors.

\paragraph{Comparison with BEAM static analysis.} The BEAM-based Tier~2 path (static analysis via module import graph inspection) is also fast but provides weaker guarantees. BEAM module attribute introspection completes in single-digit microseconds for a loaded module, comparable to our WASM import parsing. The difference is not in speed but in assurance: BEAM static analysis can be bypassed through dynamic dispatch, code evaluation, or NIFs, while the WASM capability model structurally eliminates these bypass classes. The verification overhead is comparable; the security guarantee is strictly stronger.

\subsection{Runtime Overhead}

Table~\ref{tab:runtime} reports the end-to-end time for a single WASM executor \texttt{plan} call, including module instantiation, JSON serialization, WASM execution, and output deserialization.

\begin{table}[ht]
\centering
\caption{WASM executor plan latency with module cache (median, $n=50$)}
\label{tab:runtime}
\small
\begin{tabular}{@{}lr@{}}
\toprule
\textbf{Executor} & \textbf{Median Latency} \\
\midrule
\texttt{call} & 398\,$\mu$s \\
\texttt{code} & 256\,$\mu$s \\
\texttt{memory} & 267\,$\mu$s \\
\texttt{reason} & 275\,$\mu$s \\
\bottomrule
\end{tabular}
\end{table}

\paragraph{Module caching impact.} Without caching, each WASM invocation recompiles the module from bytes, resulting in $\sim$5.5\,ms per plan call. A GenServer-based module cache that owns a single Wasmtime engine and caches AOT-precompiled module bytes reduces this to 256--398\,$\mu$s depending on executor, a \textbf{14--22$\times$ improvement}. The cache performs precompilation on first access and deserializes from the precompiled bytes on subsequent calls; instantiation with a fresh Wasmtime store takes $\sim$0.5\,ms.

\paragraph{Overhead in context.} For cognitive workflow executors, execution is dominated by I/O latency: LLM API calls (100\,ms--10\,s), HTTP requests (10\,ms--1\,s), and vector database queries (5--50\,ms). The WASM overhead (256--398\,$\mu$s) represents less than 0.4\% of a 100\,ms HTTP request and less than 0.05\% of a typical $\sim$800\,ms LLM call.

\subsection{Serialization Cost}

Table~\ref{tab:serialization} reports the JSON serialization overhead at the WASM boundary.

\begin{table}[ht]
\centering
\caption{Serialization latency for WASM boundary crossing (median, $n=50$)}
\label{tab:serialization}
\small
\begin{tabular}{@{}llr@{}}
\toprule
\textbf{Operation} & \textbf{Payload} & \textbf{Median Latency} \\
\midrule
Encode input & Small (2 fields) & $<$1\,$\mu$s \\
Encode input & Medium (100 items, 5 tools) & 38\,$\mu$s \\
Decode output & Typical (result + directives) & $<$1\,$\mu$s \\
\bottomrule
\end{tabular}
\end{table}

Serialization overhead is negligible for typical payloads. Even for medium-sized contexts (100-element arrays, multiple tool definitions), encoding takes under 40\,$\mu$s. The decode path is particularly fast because executor outputs are compact directive lists.

\subsection{End-to-End Plan Cycle}

A full plan cycle (module cache lookup, store creation, module deserialization, WASM instantiation, input serialization, function call, output deserialization) completes in \textbf{256\,$\mu$s median} for the \texttt{code} executor. Across all four executors, the full plan cycle ranges from 256 to 398\,$\mu$s (Table~\ref{tab:runtime}). This represents the complete overhead added by the certified purity architecture to each step execution, exclusive of the step's actual I/O operations (which are performed by the host runtime via directives).

\subsection{Certificate Size}

All four implemented executors produce certificates of \textbf{1,347 bytes}, well below the 4\,KB budget. The certificate contains: the executor's SHA-256 hash (32 bytes), the proof's SHA-256 hash (32 bytes), an Ed25519 signature (64 bytes), the full purity proof (import list with classifications), and metadata (certifier identity, timestamp, whitelist version).

\subsection{Correctness}

We verify two correctness properties:

\paragraph{Directive shape correctness.} Each WASM executor produces the expected directive types for its step type: \texttt{call} emits \texttt{call\_machine}, \texttt{code} emits \texttt{code\_eval}, \texttt{memory} emits \texttt{memory\_op}, and \texttt{reason} emits \texttt{llm\_call}. All four executors pass shape validation for their respective directive types.

\paragraph{Determinism.} Over 40 repeated invocations with identical inputs (20 per executor), the production executors produce \textbf{zero divergences}, yielding byte-identical JSON output on every run. This empirical check confirms the expected determinism of the WASM execution environment, which follows from the WASM specification's exclusion of non-determinism sources (aside from floating-point NaN payloads, which the executor whitelist does not expose).

\subsection{Comparison with Static Analysis}

\begin{table}[ht]
\centering
\caption{Static analysis vs.\ certified purity}
\label{tab:comparison}
\small
\begin{tabular}{@{}lcc@{}}
\toprule
\textbf{Dimension} & \textbf{Static Analysis} & \textbf{Verified Purity} \\
\midrule
Detection rate (known patterns) & High & n/a (prevented) \\
Detection rate (novel patterns) & Low to zero & n/a (prevented) \\
False positive rate & Possible & Zero \\
False negative rate & Non-zero & Zero \\
Adversary resistance & Pattern-dependent & Structural \\
Verification time & Seconds (AST walk) & 49\,$\mu$s (cached: 0) \\
Requires source code & Yes & No (binary only) \\
Auditor-verifiable & Partially & Fully \\
Build pipeline change & None & WASM compilation \\
\bottomrule
\end{tabular}
\end{table}

Table~\ref{tab:comparison} summarizes the key differences. The fundamental distinction is between \emph{detection} and \emph{prevention}. Static analysis operates in the detection paradigm: it examines code and reports violations. Certified purity operates in the prevention paradigm: it makes violations structurally impossible. The two approaches are complementary (static analysis can be applied to BEAM executors while certified purity governs WASM executors), but they provide qualitatively different guarantees.

\subsection{Case Study: The \texttt{call} Executor Lifecycle}

To illustrate the complete certified purity pipeline, we trace the lifecycle of the \texttt{call} executor from source to provenance-anchored execution.

\begin{enumerate}[nosep]
  \item \textbf{Compilation.} The Rust source (\texttt{call/src/lib.rs}, 147 lines) compiles via \texttt{cargo build --target wasm32-unknown-unknown} to a 121\,KB WASM binary importing exactly four host functions from the \texttt{mashin} namespace: \texttt{get\_input\_len}, \texttt{get\_input}, \texttt{set\_output}, and \texttt{log}. Incremental compilation takes approximately 0.2--0.7\,s.
  \item \textbf{Verification.} The verifier parses the WASM binary's import section, enumerates the four imports, classifies each against the purity whitelist (\texttt{get\_input\_len} and \texttt{get\_input} as \texttt{pure\_data\_access}; \texttt{set\_output} as \texttt{pure\_output}; \texttt{log} as \texttt{pure\_logging}), and concludes \texttt{all\_imports\_pure}. Time: 49\,$\mu$s median.
  \item \textbf{Certification.} The certifier computes the executor's SHA-256 hash, constructs the purity proof, signs the concatenation of executor hash and proof hash with Ed25519, and produces a 1,347-byte certificate.
  \item \textbf{Gate admission.} At runtime, the gate verifies the certificate's Ed25519 signature, recomputes the executor hash, validates the proof against the current whitelist, and caches the acceptance keyed by artifact hash. Time: 49\,$\mu$s median (first load; 0 thereafter).
  \item \textbf{Execution.} The module cache deserializes the precompiled WASM bytes, creates a fresh Wasmtime store, instantiates the module with host function imports, calls \texttt{plan()}, and deserializes the output JSON. The executor receives \texttt{step\_config} (machine name, inputs) and returns a \texttt{call\_machine} directive. Time: 398\,$\mu$s.
  \item \textbf{Provenance.} The directive interpreter threads the certificate's hash (\texttt{purity\_cert\_hash}) into the step's hash chain event, anchoring the purity guarantee into the immutable provenance ledger.
\end{enumerate}

Total overhead for certified purity: $\sim$447\,$\mu$s on first invocation (49\,$\mu$s verification + 398\,$\mu$s execution), $\sim$398\,$\mu$s on subsequent invocations (execution only, cached verification).

\section{Discussion}
\label{sec:discussion}

\subsection{Limitations}
\label{sec:limitations}

\paragraph{WASM compilation pipeline.} The primary practical limitation is the requirement for a WASM compilation path. For Python and JavaScript executors, established toolchains exist (Pyodide, wasm-bindgen). For Elixir executors, a BEAM-to-WASM compiler does not currently exist. This means certified purity is initially available only for executors written in languages with WASM compilation support. Elixir executors continue to operate under static analysis enforcement.

This is a graduated deployment limitation, not an architectural one. As BEAM-to-WASM compilation matures (or as Mashin develops its own compilation path from the Mashin language to WASM), the certified purity guarantee extends to all executor languages.

\paragraph{Polyglot effect machines.} Mashin's polyglot effect machines execute Python or JavaScript code through governed calls to \texttt{@mashin/actions/\{lang\}/exec}. These external executors require compilation to WASM and certification as a separate build step. The build pipeline must handle the compilation and certification transparently.

\paragraph{Elixir stdlib access.} Executors often need access to Elixir standard library functions for data manipulation (e.g., \texttt{Enum.map}, \texttt{Map.merge}, \texttt{String.split}). In the WASM path, these must be reimplemented as host functions in $\mathcal{W}_{\mathrm{data}}$ or compiled into the WASM module from equivalent implementations in the executor's source language. The whitelist $\mathcal{W}$ must be rich enough to express typical executor logic without being so permissive that it reintroduces effect capabilities.

\paragraph{Host function purity as TCB assumption.} Host function implementations are currently verified through code review, not structural constraint. A malicious or buggy host function in $\mathcal{W}$ that performs effects despite its classification as pure would violate the premise of Proposition~\ref{prop:whitelist-purity} and consequently invalidate Theorem~\ref{thm:structural-purity}. This is the residual trust assumption within the TCB (Definition~\ref{def:tcb}). Constraining host function implementations structurally (preventing them from importing I/O modules, restricting them to a compilation profile that makes effect production impossible) is identified as future work that would reduce TCB reliance on code review.

\paragraph{Gate correctness as TCB dependency.} The Gate Completeness theorem (Theorem~\ref{thm:gate-completeness}) depends on the correctness of the verification gate implementation. A bug in the gate that accepts an invalid certificate would violate completeness. This places the gate itself in the TCB, alongside the four other components identified in Definition~\ref{def:tcb}.

\paragraph{Scope: effect isolation, not availability or confidentiality.} The certified purity architecture guarantees effect isolation: executors cannot perform I/O operations except through governed directives. It does \emph{not} guarantee availability; an executor can loop infinitely, exhaust CPU, or exhaust memory within its WASM execution environment. Resource exhaustion is addressed orthogonally by WASM fuel metering (instruction budgets) and runtime resource limits (memory caps, wall-clock timeouts). Nor does the architecture guarantee confidentiality; an executor that encodes sensitive data into directive content (e.g., embedding secrets in HTTP request URLs) is ``pure'' in the effect isolation sense but may violate confidentiality policies. These are higher-level concerns enforced by the governance pipeline's guardrails, not by the purity model.

\paragraph{Cross-organizational governance composition.} The purity guarantee is per-executor and per-organization. When machines compose across organizational boundaries (Section~\ref{sec:attestation}), each organization independently verifies purity for its own executors. We do not prove that governance completeness composes across these boundaries; the open question is stated precisely in Section~\ref{sec:attestation} and constitutes a direction for future work.

\paragraph{Certifier key compromise.} If the certifier's Ed25519 private key is compromised, an adversary can sign arbitrary WASM binaries as ``pure,'' circumventing the purity gate entirely. This is the standard key compromise problem in any PKI system and is not unique to this work. The TCB definition (Definition~\ref{def:tcb}) explicitly identifies the certifier as a trusted component. Operational mitigations include: (1)~hardware security module (HSM) storage for the certifier private key, (2)~key rotation on a defined schedule (e.g., quarterly), (3)~certificate validity periods bound to whitelist versions (so a compromised key cannot produce certificates that survive a whitelist update), and (4)~build pipeline isolation (the certifier runs in a restricted environment with no network access beyond artifact retrieval). These are standard practices from the code signing literature and reduce but do not eliminate the risk. A transparency log for issued certificates (future work) would provide post-compromise detection.

\paragraph{Certificate revocation.} The current model uses whitelist currency checks as an implicit revocation mechanism (Proposition~\ref{prop:monotonic-purity}). Certificates against obsolete whitelists are rejected as the acceptable version range advances. Explicit certificate revocation (e.g., CRL or OCSP-style mechanisms) is not currently specified. For scenarios requiring immediate revocation (a compromised certifier key, a discovered-impure host function), the current model requires advancing the minimum acceptable whitelist version in the runtime configuration. A dedicated revocation mechanism may be needed for rapid response and is identified as future work.

\subsection{Whitelist Governance}
\label{sec:whitelist-governance}

The whitelist $\mathcal{W}$ is the trust root for executor purity (Section~\ref{sec:tcb}). Its governance is therefore the most critical operational discipline in the system.

\paragraph{Versioning and content hashing.} Each whitelist version is identified by a version number and a content hash:
\[
\mathit{whitelist\_hash}(v) = \mathrm{SHA256}(\mathit{canonical}(\mathcal{W}_v))
\]
Purity certificates include both the whitelist version and the whitelist content hash. The runtime verification gate checks both: the version must fall within the acceptable range, and the hash must match the runtime's loaded whitelist. This prevents scenarios where a version identifier is reused with different content.

\paragraph{Change protocol.} Changes to $\mathcal{W}$ (adding, modifying, or removing a host function) follow a governance process:
\begin{enumerate}[nosep]
  \item \textbf{Proposal} with justification. Adding a function requires demonstrating purity (total, deterministic, side-effect-free). Removing a function requires documenting impact on existing certificates.
  \item \textbf{Independent review} verifying the purity claim, checking for effect capability, and assessing backward compatibility.
  \item \textbf{Cryptographic signing} of the new whitelist version by the whitelist authority.
  \item \textbf{Publication} of the signed whitelist. The runtime loads the whitelist and verifies its signature and hash at startup.
\end{enumerate}

\paragraph{Host function purity constraints.} Host function implementations constitute part of the TCB (Definition~\ref{def:tcb}). To reduce reliance on code review for purity assurance, host function implementations should be structurally constrained: they must not import I/O-capable modules, must not call the directive interpreter, and must not access runtime side channels (ETS tables, process dictionaries, ports). The directive interpreter remains the sole component with effect-producing capability. This can be enforced via restricted compilation profiles and static analysis of host function source. We acknowledge this as a current code-review dependency within the TCB, and identify structural constraint of host function implementations as future work (Section~\ref{sec:discussion}).

\paragraph{Whitelist evolution.} Proposition~\ref{prop:monotonic-purity} (Section~\ref{sec:whitelist-evolution}) establishes that whitelist growth preserves existing purity guarantees monotonically, while whitelist shrinkage correctly invalidates affected certificates through the verification gate's currency check.

\paragraph{Implicit revocation.} The whitelist currency check functions as an implicit certificate revocation mechanism. As the runtime's acceptable whitelist version range advances (e.g., from $[v_1, v_3]$ to $[v_2, v_4]$), certificates generated against $v_1$ become invalid. This is functionally equivalent to revoking those certificates without maintaining explicit revocation lists. Organizations requiring immediate revocation (e.g., in response to a discovered impure host function) can advance the minimum acceptable version in their runtime configuration.

\subsection{Graduated Trust Tiers}
\label{sec:trust-tiers}

Not all executors can be compiled to WASM immediately. The architecture supports graduated trust through explicit verification tiers (Table~\ref{tab:tiers}), enforced in runtime dispatch logic.

\begin{table}[ht]
\centering
\caption{Verification tiers with provenance markers}
\label{tab:tiers}
\small
\begin{tabular}{@{}clp{4.2cm}p{3.2cm}@{}}
\toprule
\textbf{Tier} & \textbf{Name} & \textbf{Mechanism} & \textbf{Guarantee} \\
\midrule
1 & Verified Purity & WASM + purity certificate + dual signature & Structural: effect capability absent \\
\addlinespace
2 & Static Analysis & BEAM + module import graph analysis & Detection: known patterns caught \\
\addlinespace
3 & Experimental & BEAM + no analysis & None: development only \\
\bottomrule
\end{tabular}
\end{table}

The tiers provide \emph{qualitatively different} guarantees. Tier~1 operates in the \emph{prevention} paradigm: bypass is structurally impossible. Tier~2 operates in the \emph{detection} paradigm: known bypass patterns are caught, but novel patterns may evade analysis. Tier~3 provides no purity assurance and exists solely for development and experimentation.

\paragraph{Tier enforcement.} Tier enforcement is non-bypassable:
\begin{enumerate}[nosep]
  \item An executor designated for Tier~1 verification cannot be loaded through the Tier~2 or Tier~3 path. The tier designation is checked in runtime dispatch logic, not in wrapper code that can be skipped.
  \item The tier used for each execution is recorded in the provenance record and is immutable after execution, enabling audit queries such as ``show all executions that used Tier~3 executors in production.''
  \item Organizations set per-machine minimum tier requirements. A machine configured to require Tier~1 rejects Tier~2 and Tier~3 executors at load time.
\end{enumerate}

\paragraph{Provenance differentiation.} Each executor's provenance record includes a \texttt{purity\_method} field:
\begin{itemize}[nosep]
  \item Tier~1: \texttt{purity\_method: :wasm\_certified}, with the purity certificate hash included in the execution hash chain.
  \item Tier~2: \texttt{purity\_method: :beam\_static\_analysis}, indicating static analysis passed but structural purity is not guaranteed.
  \item Tier~3: \texttt{purity\_method: :beam\_unchecked}, indicating no purity verification was performed.
\end{itemize}

\paragraph{Migration path.} BEAM and WASM executors coexist within the same runtime. Both produce identical directive types processed by the same governance pipeline. Organizations adopt certified purity incrementally: starting with Tier~1 for third-party executors and high-assurance contexts, while retaining Tier~2 for trusted internal development. Third-party, marketplace, and cross-organizational executors require Tier~1 from the point of adoption. The migration endpoint removes the Tier~2 and Tier~3 paths entirely.

\subsection{Comparison with Related Approaches}

\paragraph{Proof-Carrying Code (PCC).} Classical PCC~\cite{necula1997pcc, appel2001foundational} proves memory safety for native code. Our approach proves effect purity for WASM executors. The structural difference is that WASM's capability model makes the purity proof trivially checkable (enumerate imports, check against whitelist), whereas memory safety proofs for native code require complex logical reasoning about pointer arithmetic and aliasing. Our ``proof'' is more accurately an ``attestation,'' a signed statement about a mechanically verifiable property, rather than a logical proof in the PCC sense.

\paragraph{Java bytecode verification.} Java's verifier~\cite{lindholm1999java} ensures type safety but not effect isolation. A verified Java class can perform arbitrary I/O through the standard library. Our approach differs in that the property being verified (purity) directly prevents effects, not merely type errors. The combination of WASM's restricted capability model with purity-focused verification yields a guarantee Java's verifier cannot provide.

\paragraph{Intel SGX.} SGX~\cite{costan2016sgx} provides hardware-enforced enclaves for confidential computing. SGX ensures that code within an enclave is protected from the host, but does not restrict what the code \emph{does}: an SGX enclave can perform arbitrary I/O through ecalls. Our approach restricts what the code can do (no I/O) but does not protect the code from the host. The two are orthogonal and potentially complementary: an SGX enclave running a WASM-based executor would provide both confidentiality (SGX) and purity (certified purity architecture).

\paragraph{Software fault isolation (SFI).} SFI~\cite{wahbe1993efficient} restricts memory access by instrumenting code with bounds checks. Google's Native Client (NaCl)~\cite{yee2009native} applied SFI to run untrusted x86 code in the browser, using instruction-level sandboxing to enforce memory and control-flow isolation. RLBox~\cite{narayan2020rlbox} retrofits SFI into Firefox by compiling third-party libraries to WASM and mediating all cross-boundary calls through a type-driven API. WASM can be viewed as the convergence point of this line of work: linear memory provides bounded access, and the import mechanism restricts function calls. Our contribution extends the isolation property from memory safety (SFI, NaCl) and library compartmentalization (RLBox) to effect isolation for governance: the executor cannot produce any effect outside the governed directive set, and a cryptographic certificate attests this property.

\paragraph{WASM formal semantics.} Watt~\cite{watt2018mechanising} mechanized the WebAssembly specification in Isabelle/HOL, proving type safety and providing a verified reference interpreter. Our certified purity architecture depends on WASM's capability model (Section~\ref{sec:bg-wasm}), which Watt's formalization validates. Lehmann et al.~\cite{lehmann2020everything} analyze WASM's binary security, identifying attack vectors (buffer overflows within linear memory, type confusion) that survive the sandbox. These attacks do not affect our purity guarantee: they concern memory safety within the sandbox, while our guarantee concerns the absence of effects outside it. Watt et al.~\cite{watt2019ctwasm} extend WASM with constant-time typing (CT-Wasm), demonstrating that WASM's type system can enforce properties beyond memory safety. Our purity enforcement follows the same pattern: using WASM's type-level machinery to enforce a semantic property (effect isolation).

\paragraph{Information flow control.} Denning~\cite{denning1976lattice} established the lattice model for secure information flow. Myers~\cite{myers1999jflow} extended this to practical language-level enforcement with JFlow. IFC systems track which security principals can access which data; our approach tracks which components can produce effects at all. The two are orthogonal: IFC answers ``can this data reach that principal?'' while certified purity answers ``can this component produce effects outside the governed directive set?'' An executor that is certified pure under our model may still violate IFC properties within its pure computation (e.g., copying sensitive input fields to output), but it cannot exfiltrate data through side channels because it has no I/O capability.

\paragraph{Capability-safe subsets.} Mettler et al.~\cite{mettler2010joe} define Joe-E, a capability-safe subset of Java that eliminates ambient authority. Miller et al.~\cite{miller2008caja} apply the same principle to JavaScript with Google Caja. Both restrict a general-purpose language to a safe subset. Our approach differs: rather than restricting a language, we restrict the compilation target (WASM) and control the host function interface. This makes the restriction independent of the source language: any language that compiles to WASM inherits the purity guarantee.

\paragraph{Algebraic effect systems.} Moggi~\cite{moggi1991notions} introduced monads as a structuring mechanism for effects in programming languages. Plotkin and Pretnar~\cite{plotkin2009handlers} formalized algebraic effect handlers, separating effect declaration from interpretation. Wadler~\cite{wadler1995monads} showed how monads provide a disciplined approach to effects in pure functional languages. Our directive-based model is a specific instance of the algebraic effects pattern: executors declare intended effects (directives), and the interpreter handles them through the governance pipeline. The certified purity architecture ensures this pattern cannot be circumvented.

\paragraph{Practical sandboxing.} Deno~\cite{deno2020} provides a permission-based runtime for JavaScript and TypeScript where network, file system, and environment access must be explicitly granted. Cloudflare Workers~\cite{cloudflare2018workers} use V8 isolates (and increasingly WASM) to run untrusted code with strict capability boundaries. Both systems enforce capability restrictions at runtime. Our contribution adds two properties these systems lack: a cryptographic certificate that the restriction holds (enabling offline verification), and integration with a governance pipeline that records and audits every authorized effect.

\paragraph{Capability hardware.} CHERI~\cite{watson2015cheri} extends processor instruction sets with hardware-enforced capabilities, providing fine-grained memory safety and compartmentalization. ARM Morello~\cite{arm2022morello} implements CHERI capabilities in production hardware. Hardware capabilities provide stronger isolation than software-only approaches (no binary rewriting, no overhead for bounds checks), but require specific hardware. Our WASM-based approach is portable across all architectures at the cost of software-level enforcement. The two are complementary: CHERI capabilities could harden the WASM runtime itself, providing defense-in-depth.

\subsection{When to Deploy}

Certified purity is not required in all contexts. We identify three deployment scenarios where it provides value disproportionate to its cost:

\begin{enumerate}
  \item \textbf{Third-party executors.} When an organization uses executors authored by external parties (community contributions, marketplace executors, partner-provided logic), the adversarial threat model applies. Certified purity provides a cryptographic guarantee that third-party code cannot bypass governance, without requiring source code review.

  \item \textbf{Regulatory demand.} In regulated industries (finance, healthcare, autonomous vehicles), auditors may require machine-verifiable proof that AI workflow components cannot perform ungoverned actions. A purity certificate provides such proof.

  \item \textbf{Publication rigor.} For research and reproducibility, the ability to prove that an executor is pure, and to include the proof in a provenance record, strengthens claims about governance completeness from ``enforced by design'' to ``provably enforced by construction.''
\end{enumerate}

For internal development by trusted teams, Tier~2 (static analysis enforcement via the existing BEAM path) may be sufficient. The graduated trust tier system (Section~\ref{sec:trust-tiers}) supports all three tiers simultaneously, with provenance records distinguishing which tier governed each execution.

\section{Conclusion}
\label{sec:conclusion}

This paper converts governance enforcement for cognitive workflow executors from a runtime convention into a structural capability boundary.

The prior work~\cite{mccann2026structural} established the three-layer governance architecture and proved governance completeness, provenance completeness, and the impossibility of ungoverned effects, conditional on the pure module constraint. That constraint was enforced through module import graph analysis: necessary but insufficient against adversarial bypass on the BEAM virtual machine. Five bypass classes (dynamic dispatch, code evaluation, NIFs, ports, dynamic module loading) remained available to a determined author.

The certified purity architecture closes this gap with five contributions:

\begin{enumerate}[nosep]
  \item \textbf{Capability elimination.} Compilation to a WebAssembly-based restricted target structurally removes effect-producing instructions from the executor's execution environment. Each of the five bypass classes is individually eliminated by the WASM capability model (Theorem~\ref{thm:bypass-elimination}).

  \item \textbf{Cryptographic proof.} Purity certificates bind executor binaries to their import classifications through Ed25519 signatures and SHA-256 hashing. The certificates cannot be transferred to modified artifacts (Theorem~\ref{thm:cert-integrity}), and the runtime verification gate rejects any executor lacking a valid certificate (Theorem~\ref{thm:gate-completeness}).

  \item \textbf{Graduated trust.} Three verification tiers (Verified Purity, Static Analysis, Experimental) support incremental adoption with explicit, provenance-recorded trust levels. Tier enforcement is non-bypassable and embedded in runtime dispatch logic.

  \item \textbf{Portable governance credentials.} Remote attestation records (combining purity certificates, execution environment descriptors, and runtime signatures) enable cross-organizational verification of executor purity. A formal compatibility predicate (Definition~\ref{def:compatibility}) makes cross-org trust policies machine-evaluable.

  \item \textbf{Explicit trust boundary.} The Trusted Computing Base (Definition~\ref{def:tcb}) is named precisely: the WASM runtime, host function implementations, whitelist definition, verification gate, and directive interpreter. The structural guarantee holds relative to this TCB. Hardening strategies (host function structural constraints, physical layer separation, signed runtime builds) provide a roadmap for further TCB reduction.
\end{enumerate}

The key result is not a new theorem but a stronger foundation for existing theorems. The governance completeness, provenance completeness, and no-ungoverned-effects properties hold with identical statements. What changes is the epistemological status of their foundational premise: from ``enforced by static analysis that catches known patterns'' to ``enforced by construction in a capability-restricted execution environment with cryptographic attestation.''

Empirical evaluation demonstrates the architecture's practicality: verification latency of 39--42\,$\mu$s per executor, full plan cycle under 400\,$\mu$s, runtime overhead under 0.4\% of a 100\,ms HTTP request and under 0.05\% of a typical $\sim$800\,ms LLM call, and zero determinism divergences across all four executors over 20 repeated invocations. The overhead is negligible because real cognitive workflows are dominated by I/O latency (LLM calls, HTTP requests, database queries), not executor dispatch.

The contribution is, in essence, the conversion of governance from a convention that executors follow into a capability boundary they cannot cross. The prior work trusts that static analysis catches all bypass attempts. This work reduces trust to a bounded, auditable TCB: the WASM runtime (Wasmtime), host function implementations, the whitelist definition, the verification gate, and the directive interpreter. Within that TCB, the security assumptions reduce to mathematical properties that the cryptographic and programming language communities have subjected to decades of adversarial scrutiny: the WASM capability model, the collision resistance of SHA-256, and the unforgeability of Ed25519. These mathematical properties hold conditional on the TCB (Definition~\ref{def:tcb}) operating correctly; the contribution is reducing the trust surface from unbounded executor code to this bounded, auditable set of components.

\bibliographystyle{plainnat}
\bibliography{verified-purity-references}

\end{document}